\let\proof\relax
\let\endproof\relax
\g@addto@macro{\maketitle}{\@thanks}
\newcounter{note}[section]
\newcommand{\poly}{\mathrm{poly}} 
\def\congest{CONGEST\xspace}
\def\blockroute{\textsc{BlockRoute}\xspace}
\newcommand{\calg}[1]{\Cref{#1}}
\DeclareMathOperator*{\argmin}{arg\,min}
\newcommand{\FullOrShort}{full}
	\newcommand{\fullOnly}[1]{#1}
	\newcommand{\shortOnly}[1]{}
\newcommand{\fullOnly}[1]{}
\newcommand{\shortOnly}[1]{#1}
\theoremstyle{plain}
\newtheorem{thm}{Theorem}[section]
\newtheorem{cor}[thm]{Corollary}
\newtheorem{lem}[thm]{Lemma}
\newtheorem{Def}[thm]{Definition}
\newtheorem{obs}[thm]{Observation}
\crefname{listing}{Program-code}{Program-codes}
\begin{document}

\pagenumbering{gobble}

\title{Round- and Message-Optimal Distributed Graph Algorithms\footnote{An extended abstract of this work will appear in PODC 2018.}}

\newcommand*\samethanks[1][\value{footnote}]{\footnotemark[#1]}

\author[ ]{Bernhard Haeupler\thanks{Supported in part by NSF grants CCF-1527110, CCF-1618280 and NSF CAREER award CCF-1750808.}, D. Ellis Hershkowitz\samethanks, David Wajc\samethanks}
\affil[ ]{Carnegie Mellon University\\
{\{haeupler,dhershko,dwajc\}@cs.cmu.edu}}
\date{}
\maketitle

\date{}

\renewcommand{\thefootnote}{\arabic{footnote}}
\setcounter{footnote}{0}

\maketitle
\begin{abstract}\normalsize
Distributed graph algorithms that separately optimize for either the number of rounds used or the total number of messages sent have been studied extensively. However, algorithms simultaneously efficient with respect to both measures have been elusive. For example, only very recently was it shown that for Minimum Spanning Tree (MST), an optimal message and round complexity is achievable (up to polylog terms) by a single algorithm in the \congest{} model of communication.

In this paper we provide algorithms that are simultaneously round- and message-optimal for a number of well-studied distributed optimization problems. Our main result is such a distributed algorithm for the fundamental primitive of computing simple functions over each part of a graph partition. From this algorithm we derive round- and message-optimal algorithms for multiple problems, including MST, Approximate Min-Cut and Approximate Single Source Shortest Paths, among others. On general graphs all of our algorithms achieve worst-case optimal $\tilde{O}(D+\sqrt n)$ round complexity and $\tilde{O}(m)$ message complexity. Furthermore, our algorithms require an optimal $\tilde{O}(D)$ rounds and $\tilde{O}(n)$ messages on planar, genus-bounded, treewidth-bounded and pathwidth-bounded graphs.\footnote{Throughout this paper, $n$, $m$ and $D$ denote respectively the number of nodes, number of edges and the graph diameter respectively. In addition, we use tilde notation, $\tilde{O},\tilde{\Omega}$ and $\tilde{\Theta}$, to suppress polylogarithmic terms in $n$.} 
\end{abstract}

\newpage

\pagenumbering{arabic}
\setcounter{footnote}{0}
\section{Introduction}
Over the years, a great deal of research has focused on characterizing the optimal runtime for distributed graph algorithms in the \congest model of communication. Fundamental problems that have been studied include Shortest Paths \cite{nanongkai2014distributed,frischknecht2012networks,holzer2012optimal,lenzen2013efficient,lenzen2015fast,izumi2014time}, MST \cite{garay1998sublinear,kutten1995fast,peleg2000near,khan2006fast}, Min-Cut \cite{ghaffari2013distributed,nanongkai2014almost}, and Max Flow \cite{ghaffari2015near}. Runtime is measured by the number of synchronous rounds of communication, and for these problems $\tilde{\Theta}(D+\sqrt n)$ rounds are known to be necessary and sufficient \cite{dassarma2012distributed,ghaffari2013distributed,peleg2000near,elkin2006unconditional}.

Another common performance metric optimized for in the \congest model is the total number of messages sent. For MST, an $\tilde{\Omega}(m)$ lower bound is known \cite{awerbuch1990trade}.\footnote{Strictly speaking, the $\tilde{\Omega}(m)$ message lower bound for MST only holds if the algorithm is (1) deterministic (2) in the KT0 model, or (3) ``comparison-based". (Our deterministic algorithm satisfies (1),(2) and (3).) If these conditions are not met it is possible to solve MST using $\tilde{O}(n)$ messages -- beating the $\tilde{\Omega}(m)$ bound for sufficiently dense graphs. For more see \citet{mashreghi2017time}.} However, for several decades the only MST algorithms known to match this message lower bound had  sub-optimal round complexity \cite{gallager1983distributed,chin1985almost,gafni1985improvements,awerbuch1987optimal,faloutsos2004linear,awerbuch1990trade}.
The question of whether algorithms attaining both optimal round and message complexity has been a long-standing problem. For instance, \citet{peleg2000near} asked whether it might be achievable for MST.
In a recent breakthrough work \citet{pandurangan2017time} answered this question in the affirmative, providing a randomized MST algorithm with simulataneously optimal round and message complexities (up to polylog terms). Shortly thereafter \citet{elkin2017simple} provided the same result without randomization. However, simultaneously round- and message-optimal algorithms for other fundamental problems have remained elusive.



\subsection{Our Main Result}\label{sec:results}
In this paper we advance the study of simultaneously round- and message-optimal distributed algorithms. In particular, we provide such algorithms for multiple distributed graph problems.
Underlying these contributions is our main result -- a round- and message-optimal algorithm for a fundamental distributed problem, which we refer to as Part-Wise Aggregation (or PA for short). We elaborate on some applications of this algorithm in \Cref{sec:apps}, as well as \Cref{sec:appApps}. Informally, Part-Wise Aggregation is the problem of computing the result of a function applied to each part of a graph partition.
Formally, the problem is as follows.
\begin{Def}[Part-Wise Aggregation]
	\label{def:pwAgg}
	The input to \emph{Part-Wise Aggregation (PA)} is: 

\begin{enumerate}
\item a graph $G=(V,E)$;
\item a partition $(P_i)_{i=1}^N$ of $V$, where each $P_i$ induces a connected subgraph on $G$. Each node $v \in P_i$ knows an $O(\log n)$-bit value associated with it, $\text{val(v)}$, and which of its neighbors are in $P_i$;
\item a function $f$ that takes as input two $O(\log n)$-bit inputs, outputs an $O(\log n)$-bit output and is commutative and associative.
\end{enumerate}
	
\noindent The problem is solved if for every $P_i$ every $v\in P_i$ knows its part's aggregate value $f(P_i) := f(\text{val}(v_1),f(\text{val}(v_2),\dots))$, where $P_i=\{v_1,v_2,\dots\}$.
\end{Def}


The performance of our algorithm is determined by the quality of the \emph{shortcuts} that the input graph admits. Shortcuts, as well as the parameters which determine their quality -- termed the \emph{block parameter}, $b$, and \emph{congestion}, $c$ -- are formally defined in \fullOnly{\Cref{subsec:shortcuts}}\shortOnly{\Cref{subsec:prelims}}. For now, we note only that every graph admits a shortcut with $b=1$ and $c=\sqrt n$. Our main result is as follows.

\begin{restatable}{thm}{PASoln}
\label{thm:partComputation}
There exists a Part-Wise Aggregation on a graph $G$ admitting a tree-restricted shortcut with congestion $c$ and block parameter $b$ w.h.p.\footnote{Throughout the paper, by w.h.p. we mean with probability $1 - \frac{1}{\poly(n)}$.} in $\tilde{O}(bD+c)$ rounds and $\tilde{O}(m)$ messages, and deterministically in $\tilde{O}(b(D+c))$ rounds and $\tilde{O}(m)$ messages.
\end{restatable}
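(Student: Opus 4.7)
The plan is to solve PA by a pipelined upcast-then-broadcast on a spanning tree of each part, built inside the given tree-restricted shortcut, with all parts running in parallel.

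First, in parallel across all parts, I would elect a leader per $P_i$ and root an aggregation tree $T_i$ of $P_i$ at it. In the randomized variant each node picks a random $\Theta(\log n)$-bit tag and the leader is the maximum within $P_i$, computed via a tournament inside the shortcut subgraph $G[P_i]\cup H_i$; in the deterministic variant we use the given vertex IDs, paying an extra factor of $b$ in rounds. Because the shortcut is tree-restricted, $H_i$ is a tree, so a BFS from the leader inside $G[P_i]\cup H_i$ yields a rooted spanning tree $T_i$ of depth $\tilde{O}(bD)$. Running all $N$ BFSes in parallel fits in $\tilde{O}(bD+c)$ rounds w.h.p.\ via the standard random-delay scheduling argument for shortcuts, and in $\tilde{O}(b(D+c))$ rounds deterministically via a round-robin schedule.

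Second, once the trees are in place, I would aggregate by a single pipelined upcast from leaves to the root of each $T_i$, combining values using $f$ at each internal node, followed by a broadcast of $f(P_i)$ back down. Commutativity and associativity of $f$ let each internal node forward a single $O(\log n)$-bit aggregate up and a single $O(\log n)$-bit answer down, so only $O(1)$ messages traverse each edge of $T_i$. The depth and congestion bounds yield the claimed round complexities, and the message complexity summed over all parts is $O(\sum_i |T_i|)\le O(n+\sum_i |H_i|)$.

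The hardest part will be bounding this sum by $\tilde{O}(m)$: the congestion $c$ a priori only gives $\sum_i |H_i|\le cm$, which is too large when $c=\Theta(\sqrt{n})$. I would address this by trimming each $H_i$ to contain only edges lying on a root-to-leaf path connecting two $P_i$-vertices, and by amortising repeated traversals of a $G$-edge against the $\tilde{O}(m)$ global budget using the tree-restricted structure. The leader-election and BFS phases must respect the same budget, which I would enforce by having each node forward only the first parent-claim received on any edge and by piggybacking part identifiers so each $G$-edge carries $O(\log n)$ total messages across all parts. The deterministic $b$-factor in rounds then arises from replacing the randomized delay schedule with a sequential $O(\log n)$-phase schedule, which preserves the message count.
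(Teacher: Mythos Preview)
Your plan has a genuine gap in the message-complexity argument, and the paper's Section~3.1 grid example shows exactly why your approach cannot be salvaged by trimming or piggybacking. Take the $D\times (n-1)/D$ grid with an apex $r$ adjacent to the top row, with each row a part and the BFS tree consisting of the column edges plus the $r$-edges. For row $i$ (counted from the top), every column edge above row $i$ lies on a root-to-leaf path in $T$ connecting two $P_i$-vertices (namely, the path through $r$), so your ``trimming'' removes nothing: the trimmed $H_i$ still has $\Theta(i\cdot n/D)$ edges, and $\sum_i |T_i|=\Theta(nD)=\omega(m)$. Piggybacking part identifiers across an edge does not help either, because the $c$ parts sharing a tree edge carry $c$ genuinely distinct partial aggregates that cannot be combined by $f$. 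The core difficulty is that $\sum_i |H_i|$ can be $\Theta(cn)$ even for the optimal shortcut, and nothing in your outline breaks this.

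The paper's fix is a structural idea you are missing: a \emph{sub-part division}. Each part $P_i$ is refined into $\tilde{O}(|P_i|/D)$ connected sub-parts of diameter $O(D)$, each with a spanning tree and a designated representative. Aggregation proceeds in two stages: first each sub-part aggregates internally along its own tree ($O(n)$ messages total), and then only the $\tilde{O}(n/D)$ representatives use the shortcut edges, each sending $O(D)$ messages up the BFS tree, for $\tilde{O}(n)$ messages in the shortcut phase. This is exactly the workaround sketched for the grid, made general. A second gap is that the theorem only assumes the graph \emph{admits} a $(b,c)$ tree-restricted shortcut; it is not given to you. The paper spends Sections~5 and~6 constructing the shortcut itself within the $\tilde{O}(m)$ message budget (again using sub-part representatives to limit who participates), and this is a nontrivial component you have not addressed.
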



\subsection{Applications of Our Main Result}\label{sec:apps}
The power of Part-Wise Aggregation -- and by extension \Cref{thm:partComputation} -- is that numerous distributed primitives can be cast as instances of this problem. For example, it is not hard to see that electing a leader, computing the number of nodes in each tree in a forest or having every part of a graph partition agree on a minimum value are all instances of this problem. Consequently, many previous algorithms rely on subroutines which are implementable using Part-Wise Aggregation 
\cite{dassarma2012distributed, 
elkin2006faster,
gallager1983distributed,
garay1998sublinear,
ghaffari2013distributed, 
ghaffari2014near, 
ghaffari2015near,
ghaffari2016distributed, 
haeupler2017beating, 
kutten1995fast,
nanongkai2014almost}.
Perhaps unsurprisingly then, 
using our new PA algorithm as a subroutine 
in some of these previous works' 
algorithms, we obtain round- and 
message-optimal solutions to numerous problems.

In the following three corollaries we highlight 
three such applications of our algorithm: round- and message-optimal algorithms for MST, 
Approximate Min-Cut and Approximate SSSP. We give proofs of these corollaries and also discuss further applications of our PA algorithm and our subroutines in \Cref{sec:appApps}.
For a flavor of these 
corollaries' proofs, we note that Bor\r{u}vka's 
MST algorithm \cite{nesteril2001otakar} 
can be implemented easily using $O(\log n)$ applications of Part-Wise Aggregation, implying \Cref{cor:mst}. Corollaries \ref{cor:minCut} and \ref{cor:sp} are obtained by using our PA algorithm in the algorithms of \citet{ghaffari2016distributed} and \citet{haeupler2017beating}, respectively. 

\gdef\probDfnAll{The input to all three problems consists of an undirected weighted graph, with edge weights in $[1, \poly(n)]$. Initially, every node knows  the weight associated with each of its incident edges. }

\fullOnly{\probDfnAll Since every graph admits a shortcut with $b=1$ and $c = \sqrt{n}$, our algorithms simultaneously achieve message complexities of $\tilde{O}(m)$ and runtimes of essentially worst-case optimal $\tilde{O}(D  + \sqrt{n})$.
	}

\gdef\probDfnMST{\paragraph{MST.}\label{sec:mst} MST is solved when every node knows which of its incident edges are in the MST.}
\fullOnly{\probDfnMST}

\begin{restatable}{cor}{mstcor}\label{cor:mst}
	Given a graph $G$ admitting a tree-restricted shortcut with congestion $c$ and block parameter $b$, one can solve MST w.h.p in $\tilde{O}(bD+c)$ rounds and $\tilde{O}(m)$ messages and deterministically in $\tilde{O}(b(D+c))$ rounds with $\tilde{O}(m)$ messages.
\end{restatable}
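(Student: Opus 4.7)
The plan is to implement Bor\r{u}vka's MST algorithm on top of our PA primitive from \Cref{thm:partComputation}. Recall that Bor\r{u}vka's algorithm maintains a partition of $V$ into \emph{fragments}, each of which is a subtree of the eventual MST. It proceeds in $O(\log n)$ phases; in each phase every fragment identifies its minimum-weight outgoing edge (MWOE) to a different fragment and adds it to the MST, causing fragments to merge. Since the fragment count at least halves per phase, $O(\log n)$ phases suffice.

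First I would implement a single Bor\r{u}vka phase using $O(1)$ invocations of PA. At the start of a phase, each fragment $F$ is a connected subgraph of $G$ (indeed a subtree of the MST), so the current partition is a valid PA input. To compute MWOEs, I would set $\text{val}(v)$ to be the minimum-weight edge incident to $v$ that leaves $v$'s fragment (encoded with the edge's endpoints and weight, along with a consistent tie-breaking rule on IDs) and take $f$ to be the componentwise minimum. One PA invocation then informs every vertex in $F$ of $F$'s MWOE, which every fragment marks as a new MST edge.

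Next, I would merge fragments and restore the PA input invariants for the following phase. Using standard tie-breaking on vertex IDs in the MWOE selection ensures that the MWOEs form a set of trees whose node-sets are precisely the new fragments, so each new fragment remains a subtree of the MST. A second PA call on the new partition aggregates the minimum old fragment ID as the new fragment ID, and endpoints of each MWOE can locally inform one another that they now lie in the same fragment; this suffices for each vertex to know which of its neighbors lie in its new fragment (for future MWOE computations, a node can check each neighbor's fragment ID using only local computation provided fragment IDs are known, or equivalently can be told directly by the neighbor when fragments change).

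The key observation for the complexity bound is that every partition arising during the execution consists of subtrees of the MST, hence of subtrees of $G$. Therefore the \emph{tree-restricted} shortcut assumed in the hypothesis applies to every such partition, and each PA call costs $\tilde{O}(bD+c)$ rounds w.h.p. (resp.\ $\tilde{O}(b(D+c))$ deterministically) and $\tilde{O}(m)$ messages by \Cref{thm:partComputation}. Summing over $O(1)$ PA calls per phase and $O(\log n)$ phases, and absorbing the logarithmic overhead into the $\tilde{O}(\cdot)$ notation, yields the claimed $\tilde{O}(bD+c)$ (resp.\ $\tilde{O}(b(D+c))$) rounds and $\tilde{O}(m)$ messages. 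The main subtlety to watch in a full write-up is ensuring the partition update step genuinely fits the PA interface (in particular, that fragment IDs and intra-fragment neighbor sets are consistently maintained without inflating the message budget beyond $\tilde{O}(m)$), which reduces to careful local bookkeeping along the newly added MWOEs.
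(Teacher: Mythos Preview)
Your proposal is correct and follows essentially the same approach as the paper: implement Bor\r{u}vka's algorithm by casting the minimum-weight outgoing edge computation in each of the $O(\log n)$ phases as an instance of Part-Wise Aggregation and invoke \Cref{thm:partComputation}. Your write-up is in fact more detailed than the paper's, which simply notes that MWOE computation is a PA instance and that a node incident to the chosen edge remembers its neighbor as now being in the same part.
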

\gdef\mstProof{
\begin{proof}
	Our algorithm uses \Cref{thm:partComputation} to simulate Bor\r{u}vka's classic MST algorithm \cite{nesteril2001otakar}. In Bor\r{u}vka's algorithm every node initially belongs to its own part. Then, for $O(\log n)$ rounds each part merges with the part it is connected to by its minimum outbound edge. 
	The problem of determining the minimum-weight outbound edge of a part is an example of Part-Wise Aggregation, which we solve with the algorithm of \Cref{thm:partComputation}. Whenever a node $v$ is incident to this edge it remembers that the neighbor along that edge is now in the same part as $v$. Round and message complexities are trivial and correctness follows from that of Bor\r{u}vka's algorithm.
\end{proof}
}

\gdef\probDfnMC{\paragraph{Approximate Min-Cut.}\label{sec:mincut}
Min-cut is $(1 + \epsilon)$-approximated when every node knows whether or not it belongs to a set $S \subset V$ such that the size of the cut given by $(S, V\setminus S)$ is at most $(1 + \epsilon )\lambda$, where $\lambda$ is the size of the minimum cut of $G$ with the prescribed weights.}
\fullOnly{\probDfnMC}

\begin{restatable}{cor}{mincutcor}\label{cor:minCut}
For any $\epsilon > 0$ and graph $G$ admitting a tree-restricted shortcut with congestion $c$ and block parameter $b$, one can $(1 + \epsilon)$-approximate min-cut w.h.p. in $\tilde{O}\left(bD+c)\cdot \poly(1/\epsilon\right)$ rounds and $\tilde{O}\left(m \right) \cdot \poly(1/\epsilon)$ messages.
\end{restatable}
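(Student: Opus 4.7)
The plan is to take the $(1+\epsilon)$-approximate min-cut algorithm of \citet{ghaffari2016distributed} and substitute its internal shortcut-based primitives with our round- and message-efficient PA subroutine from \Cref{thm:partComputation}. The algorithm of \citet{ghaffari2016distributed} performs $\poly(1/\epsilon)\cdot \polylog n$ aggregations (sums of edge weights crossing a component boundary, minima over components, broadcasts within components) over parts of a partition of $V$ induced by a sequence of sampled/contracted subgraphs. Each such aggregation fits \Cref{def:pwAgg} cleanly: the operator is commutative and associative, per-node values fit in $O(\log n)$ bits since edge weights lie in $[1,\poly(n)]$, and every part is a connected subgraph (the contracted components being connected by construction).

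Next, I would replace each aggregation with a call to our PA algorithm. By \Cref{thm:partComputation}, each call finishes in $\tilde{O}(bD+c)$ rounds and uses $\tilde{O}(m)$ messages w.h.p. Multiplying by the $\poly(1/\epsilon)\cdot \polylog n$ total calls yields $\tilde{O}((bD+c)\cdot \poly(1/\epsilon))$ rounds and $\tilde{O}(m)\cdot \poly(1/\epsilon)$ messages. Correctness and the $(1+\epsilon)$-approximation guarantee are inherited without change from \citet{ghaffari2016distributed}, since we modify only the implementation of its subroutines and not the logic of the algorithm.

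The main obstacle I anticipate is ensuring that every operation that is not already phrased as a Part-Wise Aggregation still respects the desired message budget. The two relevant non-PA operations are random edge sampling and the bookkeeping associated with contractions (updating component identifiers). Sampling is local and is implemented by each endpoint tossing private coins and notifying its neighbors, costing $O(m)$ messages per sampling round. Contraction bookkeeping requires each node to learn its new component identifier, which is itself an instance of PA (broadcasting a representative's ID within each part) and so is absorbed by one additional PA call per phase. With these pieces in place, the global round and message counts are simply the number of phases times the per-phase cost, which matches the statement.
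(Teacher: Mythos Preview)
Your proposal is correct and takes essentially the same approach as the paper: plug the new PA primitive of \Cref{thm:partComputation} into the \citet{ghaffari2016distributed} framework and count the $\poly(1/\epsilon)\cdot\polylog n$ calls. The paper's proof is a bit more precise about what that framework actually does---Karger-style weight downsampling, then $O(\log n)\cdot\poly(1/\epsilon)$ MST computations via Thorup's tree-packing result, then a sketching step---and accordingly invokes \Cref{cor:mst} for the MST calls alongside \Cref{thm:partComputation} for the remaining PA calls; your description in terms of ``contractions'' and ``sums of crossing weights'' is not quite how their algorithm is structured, but since MST itself reduces to $O(\log n)$ PA calls your accounting still goes through.
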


\gdef\mincutProof{
\begin{proof}
Section 5.2 of \Citet{ghaffari2016distributed} provides an algorithm for approximate min-cut based on shortcuts. The algorithm works roughly as follows: use sampling ideas from \citet{karger1994random} to downsample edge weights so that the min-cut is of size $O(\log n / \epsilon^2)$; by \citet{thorup2007fully}, we can now solve MST $O(\log n) \cdot \poly(1/\epsilon)$ times (using certain different weights each time) such that there is one edge $e^*$ in one of our MSTs $T^*$ such that the two connected components of $T^* \setminus e^*$ define an approximately optimal min-cut; lastly, using a sketching approach this edge can be found by solving PA $\poly \log (n) \cdot \poly(1/\epsilon)$ times with high probability. See \cite{ghaffari2016distributed} for a proof of correctness. Our claimed round and message complexities are trivial given \Cref{cor:mst} and \Cref{thm:partComputation}, as the above algorithm requires downsampling edge weights (only requiring $O(1)$ rounds and $O(m)$ messages) and by \Cref{cor:mst} and \Cref{thm:partComputation}, the $\poly \log n \cdot \poly(1/\epsilon)$ instances of MST and PA can be solved with $\tilde{O}(bD + c) \cdot \poly(1/\epsilon)$ rounds and $\tilde{O}(m)\cdot \poly(1/\epsilon)$ messages.
\end{proof}
}

\gdef\probDfnSSSP{\paragraph{Approximate SSSP.}\label{sec:sssp}
An instance of $\alpha$-approximate single source shortest paths (SSSP) consists of an undirected weighted graph $G$ as above and a source node $s$, which knows that it is the source node. For $v \in V$ we denote by $d(s,v)$ the shortest path length between $s$ and $v$ in $G$ and $L = \max_{u,v}d(u, v)$. The problem is solved once every node $v$ knows $d_v$ such that $d(s, v) \leq d_v \leq \alpha \cdot d(s, v)$. }
\fullOnly{\probDfnSSSP}

\begin{restatable}{cor}{spcor}\label{cor:sp}
	For any $\beta=O(1/\poly\log n)$, given a graph $G$ admitting a tree-restricted shortcut with congestion $c$ and block parameter $b$, one can $L^{O(\log \log n)/ \log(1/\beta)}$-approximate SSSP w.h.p in $\tilde{O}\left(\frac{1}{\beta}(bD + c) \right)$ rounds and $\tilde{O}\left(\frac{m}{\beta}\right)$ messages.
\end{restatable}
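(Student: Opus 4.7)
The plan is to port the approximate SSSP framework of \citet{haeupler2017beating} into our setting by replacing each shortcut-based primitive with a call to our Part-Wise Aggregation algorithm from \Cref{thm:partComputation}. The Haeupler--Li algorithm proceeds in $\tilde{O}(1/\beta)$ phases; in each phase it maintains a partition of $V$ into connected pieces (think: fragments of a tentative shortest-path forest) and, inside each piece, the nodes must agree on aggregates such as minimum tentative distance estimates, fragment identifiers, hop counts, and minimum-weight outgoing edges. These are exactly the operations captured by \Cref{def:pwAgg} for a commutative and associative $f$ (min or sum) on $O(\log n)$-bit values.

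First, I would go through \cite{haeupler2017beating} and reinterpret each shortcut-dependent subroutine as an instance of PA: whenever the original algorithm uses a shortcut to disseminate the aggregate of a function over a connected piece, we instead invoke \Cref{thm:partComputation}. Because the pieces are connected subgraphs of $G$ maintained by the algorithm, every node already knows which of its neighbors lie in its piece, so the input format of \Cref{def:pwAgg} is met without extra overhead. Correctness, and hence the $L^{O(\log\log n)/\log(1/\beta)}$ approximation, carries over unchanged from \cite{haeupler2017beating}, because how the aggregates are computed is irrelevant to the analysis as long as the correct values are returned.

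Next, I would account for complexity. By \Cref{thm:partComputation}, a single PA call costs $\tilde{O}(bD+c)$ rounds and $\tilde{O}(m)$ messages w.h.p. Since the algorithm makes $\tilde{O}(1/\beta)$ calls, summing yields the claimed $\tilde{O}((bD+c)/\beta)$ round and $\tilde{O}(m/\beta)$ message bounds.

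The main obstacle is the message accounting, since \citet{haeupler2017beating} analyzed only round complexity in the shortcut model and freely assumed global operations are ``cheap.'' To close this gap I would verify that, outside of the PA subroutines, each phase sends at most $\tilde{O}(m)$ messages -- e.g., initial source/landmark broadcasts, weight rescalings, and any per-edge local exchanges used to glue pieces together -- so that the per-phase message cost remains $\tilde{O}(m)$. Once this bookkeeping is in place, multiplying by the $\tilde{O}(1/\beta)$ phase count gives the stated message bound, completing the proof.
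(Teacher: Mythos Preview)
Your approach is correct and essentially identical to the paper's: both treat \citet{haeupler2017beating} as a black box that makes $\tilde{O}(1/\beta)$ calls to Part-Wise Aggregation, substitute the PA algorithm of \Cref{thm:partComputation}, and sum. One caveat: your description of what happens inside the Haeupler--Li phases is off---the algorithm is not fragment-growing with min-weight outgoing edges \`a la Bor\r{u}vka, but rather iterated low-diameter decompositions via weighted BFS (as in \citet{miller2013parallel}), and the PA calls arise because intra-cluster edge weights are reset to zero, forcing the BFS to traverse potentially large-diameter zero-weight components ``in one step.'' This does not affect the black-box reduction, but since you explicitly plan to audit the non-PA per-phase message cost, you will need the correct picture of those phases to carry out that bookkeeping.
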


\gdef\spProof{
\begin{proof}
	\Citet{haeupler2017beating} provide a distributed algorithm with the stated round complexity and approximation factor based on a solution to PA. Roughly the algorithm works as follows. We compute $O(\frac{\log n}{\beta})$ low diameter decomposition; in particular, as in \Citet{miller2013parallel} every node starts a weighted BFS at a randomly-chosen time and runs this weighted BFS for $O(\frac{\log n}{\beta})$ rounds. During this weighted BFS, nodes claim as part of their ball any nodes they reach that have not yet been claimed or started their weighted BFS. The weights used for the weighted BFS change in each round: any weight strictly inside a claimed ball is updated to have weight 0 and any edge incident to two claimed balls is additively increased. Moreover, the increments used by the weighted BFS geometrically increase in each iteration so that despite increasing edge weights, the weighted BFS can still efficiently proceed. Lastly, the union of all BFS trees returned by every weighted BFS is returned as a tree, $T^*$, that approximates distance in the graph; to inform nodes of their approximate distance from the source, the source can simply broadcast on $T^*$.
	
	What makes this algorithm difficult to implement is that running a weighted BFS with edge weights set to 0 requires that a weighted BFS traverse components connected by weight-zero edges of potentially large diameter ``in a single round''. To overcome this issue, \Citet{haeupler2017beating} use PA to efficiently traverse these connected components. For more see \Citet{haeupler2017beating}.
	Relying on our algorithm for PA and observing that these dominate the round and message complexity of the weighted BFS calls, our claimed round and message complexities follow.
\end{proof}
}

The value of $\beta$ determines a tradeoff between the quality of the SSSP approximation and the round and message complexity of our algorithm. Taking $\beta=\log^{-\Theta(1/\epsilon)} n$, \Cref{cor:sp} yields an $O(L^\epsilon)$-approximation algorithm using $\tilde{O}(bD + c)$ rounds and $\tilde{O}(m)$ messages. 

\subsection{Discussion of Our Results}\label{sec:disc}
There are a few salient points worth noting regarding our results, on which we elaborate below. 

\paragraph{Round- and Message-Optimality of our Algorithms.} 
As all graphs admit a tree-restricted shortcut with block parameter $b=1$ and congestion $c=\sqrt n$, our
algorithms all terminate within 
$\tilde{O}(D+\sqrt n)$ rounds, which is optimal for 
all our applications of our PA algorithm, by
\cite{dassarma2012distributed}. As for message complexity, our
$\tilde{O}(m)$ bound is tight for MST 
in the $KT_0$ model by 
\cite{awerbuch1990trade}. For the other problems an $\Omega(n)$ message lower bound is trivial; for sparse graphs, then, our message complexity bound is tight for these problems.
Finally, we note that our proof of \Cref{cor:mst} relies on solving Part-Wise Aggregation $O(\log n)$ times to solve MST, which implies that our algorithms for PA are both round- and message-optimal (again, up to polylog terms). 


\paragraph{Beyond Worst-Case Optimality.} As stated above, every graph admits tree-restricted shortcuts
with block parameter $b=1$ and congestion $c=\sqrt{n}$, which implies an $\tilde{O}(D+\sqrt{n})$ bound for our algorithms' round complexity on general graphs. However, as observed in prior work, a number of graph families of interest -- planar, genus-bounded, bounded-treewidth and bounded-pathwidth graphs -- admit shortcuts with better parameters \cite{ghaffari2016distributed,haeupler2016near,haeupler2016low}. As a result, our algorithms have a round complexity of only $\tilde{O}(D)$ times the relevant parameter of interest (e.g., genus, treewidth or pathwidth). 
Provided these parameters are constant or even polylogarithmic, our algorithms run in $\tilde{O}(D)$ rounds. Another recent result \cite{haeupler2018minor} implies that our algorithms run in $\tilde{O}(D^2)$ time on minor-free graphs. We elaborate on our results for all the above graphs in \Cref{sec:fullResultsTable}. We also note that our algorithms need not know the optimal values of block parameter and congestion, as a simple doubling trick can be used to approximate the best values (see \cite{haeupler2016low}). In particular, our algorithms perform as well as the parameters of the best shortcut that the input graph admits. 

\paragraph{Future Applications of This Work.}
Non-trivial shortcuts likely exist for graph families beyond those mentioned above. 
As such, demonstrating even better runtimes for our algorithms on many networks may be achieved 
in the future by simply proving the \emph{existence} of efficient shortcuts on said networks. Moreover, 
given the pervasiveness of PA in distributed graph algorithms, the applications of our 
PA algorithm we present are likely non-exhaustive. We are hopeful that our 
PA algorithm will find applications in deriving round- and message-optimal bounds for even more 
problems.
\section{Preliminaries}\label{subsec:prelims}
Before moving onto our formal results, we explicitly state the model of communication we consider and then review relevant concepts from previous work in low-congestion shortcuts. 
\fullOnly{\subsection{CONGEST Model of Communication}}

Throughout this paper we work in the classic \congest{} model of communication \cite{peleg2000distributed}. In this model, the network is modeled as a graph $G=(V,E)$ of diameter $D$ with $n=|V|$ nodes and $m=|E|$ edges. Communication is conducted over discrete, synchronous rounds. During each round each node can send an $O(\log n)$-bit message along each of its incident edges. Every node has an arbitrary and unique ID of $O(\log n)$ bits, first only known to itself (this is the $KT_0$ model of \citet{awerbuch1990trade}).

\fullOnly{\subsection{Shortcuts and Tree-Restricted Shortcuts}
\label{subsec:shortcuts}}
Low-congestion shortcuts were originally introduced by \citet{ghaffari2016distributed} to solve PA. These shortcuts allow high-diameter parts to communicate efficiently, by using edges outside of parts; this effectively decreases the diameter of the parts. \citet{ghaffari2016distributed} showed how, given a simple low-congestion shortcut, PA can be solved in an optimal number of rounds -- i.e. $\tilde{O}( D + \sqrt{n})$ -- w.h.p. Formally, a low congestion shortcut is defined as follows.


\begin{Def}[Low-Congestion Shortcuts \cite{ghaffari2016distributed}]
Let $G = (V, E)$ be a graph and $(P_i)_{i=1}^N$ be a partition of $G$'s vertex set. $\mathcal{H} = H_1, \ldots, H_N$ where $H_i \subseteq E$ is a \emph{$c$-congestion shortcut with dilation $d$} with respect to $(P_i)_{i=1}^N$ if it satisfies
\begin{enumerate}
\item Each edge $e \in E$ belongs to at most $c$ of the $H_i$.
\item The diameter of $(P_i \cup V(H_i), E[P_i] \cup H_i)$ for any $i$ is at most $d$.\footnote{Here $V(H_i)$ denotes all endpoints of edges in $H_i$ and $E[P_i]$ denote the edges of $G$ with both endpoints in $P_i$.}
\end{enumerate}
\end{Def}

\citet{ghaffari2016distributed} also showed how to compute near-optimal $\tilde{O}(D)$-congestion and $\tilde{O}(D)$-dilation shortcuts for planar graphs, given an embedding of such a graph. This allowed them to obtain $\tilde{O}(D)$-round MST algorithms for this problem, among other results. However, it was not until the work of \citet*{haeupler2016low} that it was demonstrated that  shortcuts could be efficiently computed in general. This work showed that high quality instances of a certain type of shortcut -- \emph{tree-restricted shortcuts} -- can be efficiently approximated. These types of shortcuts are defined as follows. 


\begin{Def}[Tree-Restricted Shortcuts \cite{haeupler2016low}]
Let $G = (V, E)$ be a graph and $(P_i)_{i=1}^N$ be a partition of $G$'s vertex set. A shortcut $\mathcal{H}=(H_i)_{i=1}^N$ is a \emph{$T$-restricted shortcut} with respect to $(P_i)_{i=1}^N$ if there exists a rooted spanning tree $T$ of $G$ with $H_i\subseteq E[T]$  for all $i \in [N]$.
\end{Def}
Since a rooted BFS tree has minimal depth, and the $\tilde{O}(D)$-round $\tilde{O}(m)$-message deterministic leader election algorithm of \citet{kutten2015complexity} allows us to compute a BFS tree in the same bounds, throughout this paper $T$ will be a rooted BFS tree. The same work that introduced tree-restricted shortcuts also introduced a convenient alternative to dilation, termed \emph{block parameter}.

\begin{Def}[Block Parameter \cite{haeupler2016low}]
Let $\mathcal{H} = (H_i)_{i=1}^n$ be a $T$-restricted shortcut on the graph $G = (V, E)$ with respect to parts $(P_i)_{i=1}^n$. For any part $P_i$, we call the connected components of $(P_i \cup V(H_i), H_i)$ the \emph{blocks} of $P_i$, and the number of blocks of $P_i$ its \emph{block parameter}. The \emph{block parameter} of $\mathcal{H}$, $b$, is the maximum block parameter of any part $P_i$.
\end{Def}
 
As shown in \cite{haeupler2016low}, if $T$ is a depth-$D$ tree, the dilation of a $T$-restricted shortcut with block parameter $b$ is at most $O(bD)$. As such, block parameter is a convenient alternative to dilation. See \Cref{fig:shortcutEg} for an example of a $T$-restricted shortcut.

\begin{figure}[h]
	\centering
	\includegraphics[scale=0.18, clip]{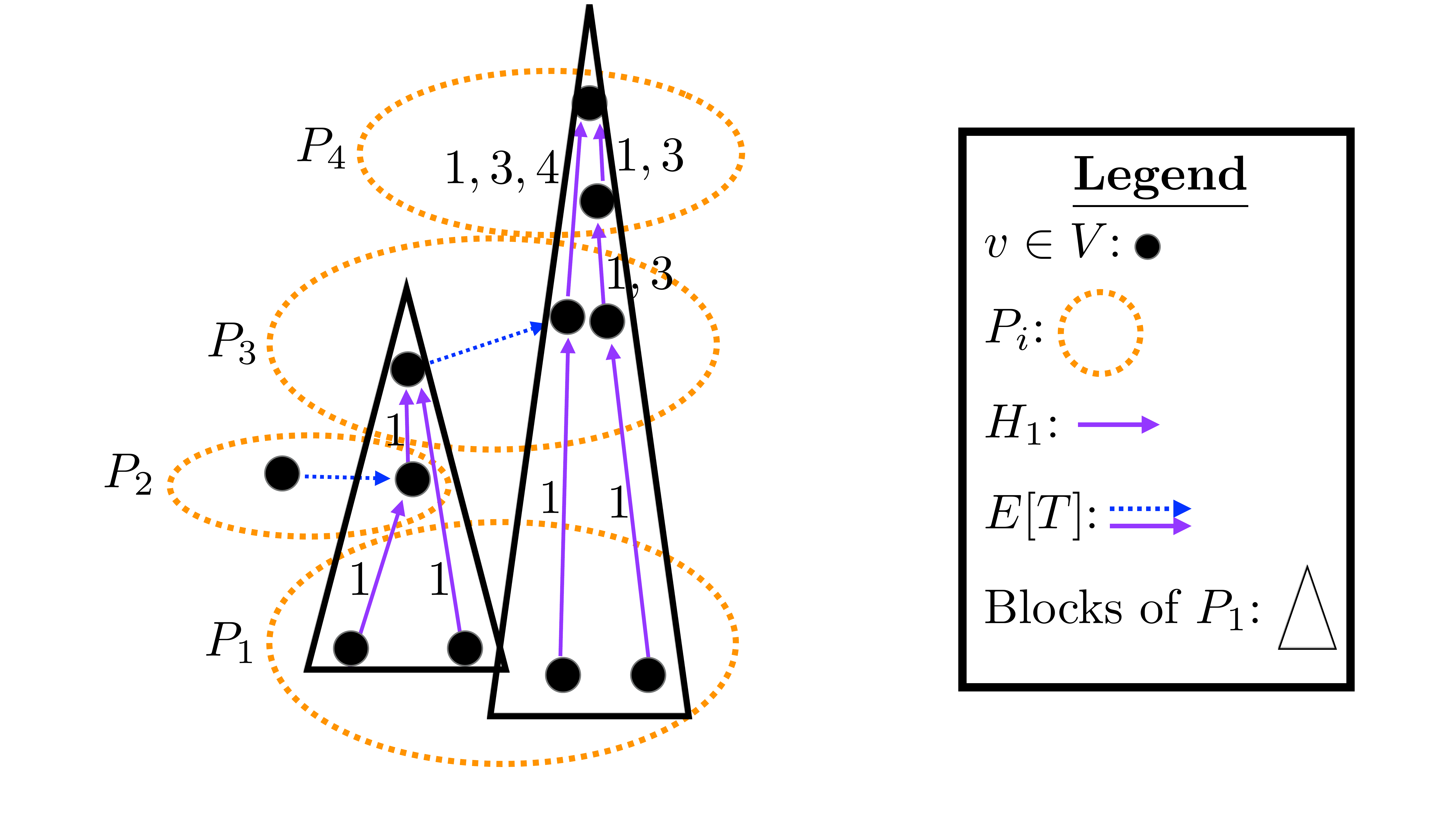}
	\caption{An example of a $T$-restricted shortcut on 4 parts. $e \in E[T]$ labeled with $\{i : e \in H_i\}$. Edges directed towards root of $T$. Here $c=3$ and $b=2$.}
\label{fig:shortcutEg}
\end{figure}

\section{Techniques}
In this section we outline our general algorithmic approach. We begin by demonstrating the message sub-optimality of previous shortcut algorithms for Part-Wise Aggregation on a particular example. We then give a workaround for this example and sketch how we develop this workaround into a full-fledged algorithm.

\subsection{Bad Example for Previous Shortcut-Based Algorithms}\label{subsec:messageInOptSCut}

Several prior round-optimal randomized algorithms for PA used tree-restricted shortcuts  \cite{haeupler2016low,haeupler2016near}. To solve PA, these algorithms repeatedly aggregate within blocks. To aggregate within a block, every node in the block transmits its value up the block (along the tree's edges); when values from the same part arrive at a node in the block, they are aggregated by applying $f$ and then forwarded up the block as a single value. By the end of this process, the root of the block has computed $f$ of the block and can broadcast the result back down. This approach can be implemented using an optimal $\tilde{O}(D+\sqrt n)$ rounds.

Unfortunately, there exist PA instances for which the above approach requires $\omega(m)$ messages. For example, consider the $D \times (n-1)/D$ grid graph with an additional node, $r$, neighboring all of the top row's nodes. 
Suppose each row is its own part, and all the column edges are shortcut edges, forming a single block rooted at $r$. 
See \Cref{fig:bad-message-complexity}. Aggregating within this block requires $\Omega(nD)$ messages: 
a message cannot be combined with other messages in its part until it has at least reached $r$ and so each node is responsible for sending a unique message to $r$ along a path of length $D/2$ on average. 
Thus, aggregating in blocks in this way to solve PA requires $\Omega(nD)$ messages, which is sub-optimal for any $D=\omega(1)$, since $m = O(n)$ for this network.

\begin{figure}[t]
	\centering
	\begin{subfigure}{.4\textwidth}
		\centering
		\includegraphics[scale=0.3]{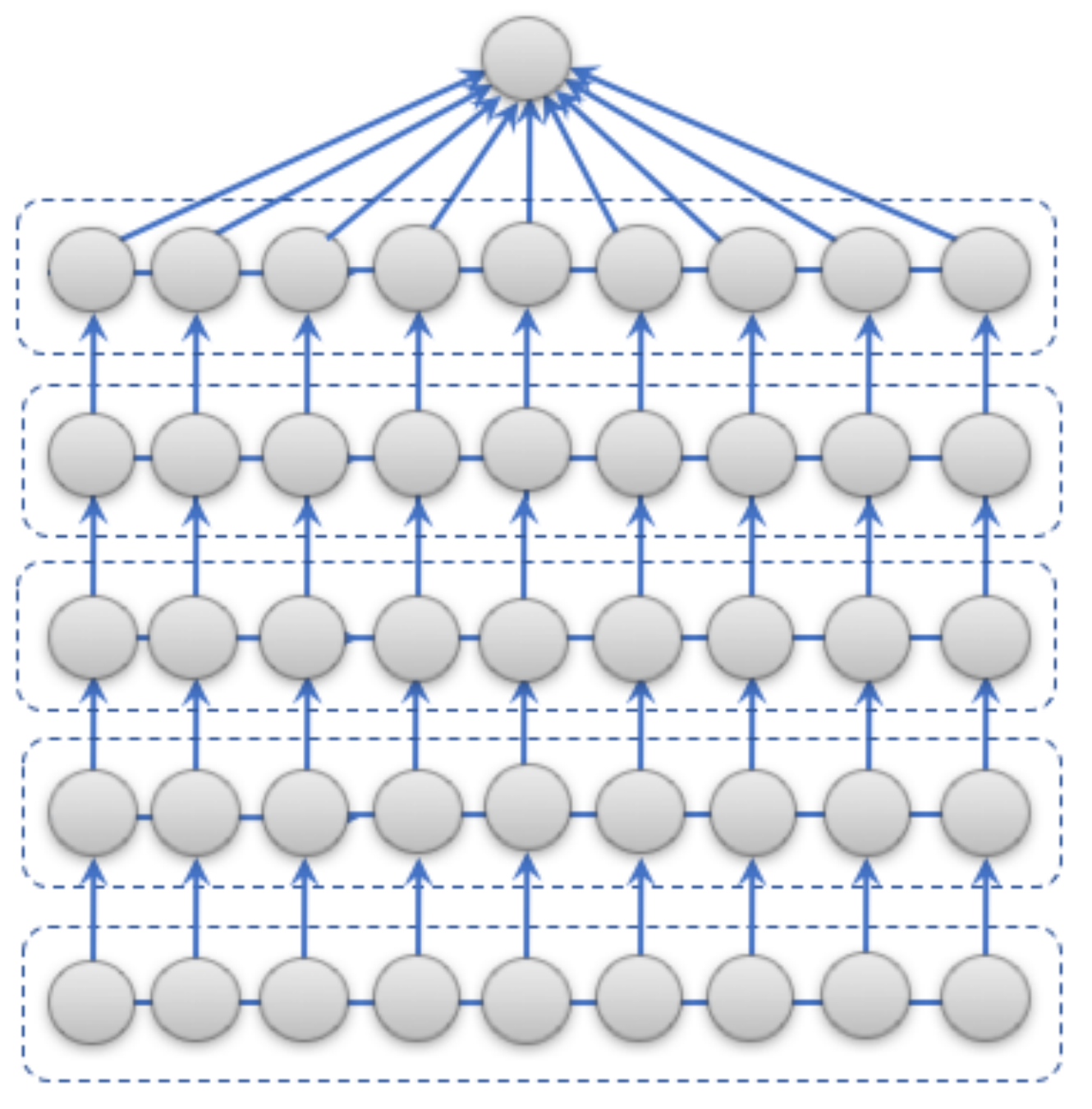}
		\caption{\small $\Omega(nD)$ message example. \\Parts are indicated by dashed rectangles. Tree edges are directed towards the root.}
		\label{fig:bad-message-complexity}
	\end{subfigure}%
	\quad 
	\begin{subfigure}{.4\textwidth}
		\centering
		\includegraphics[scale=0.3]{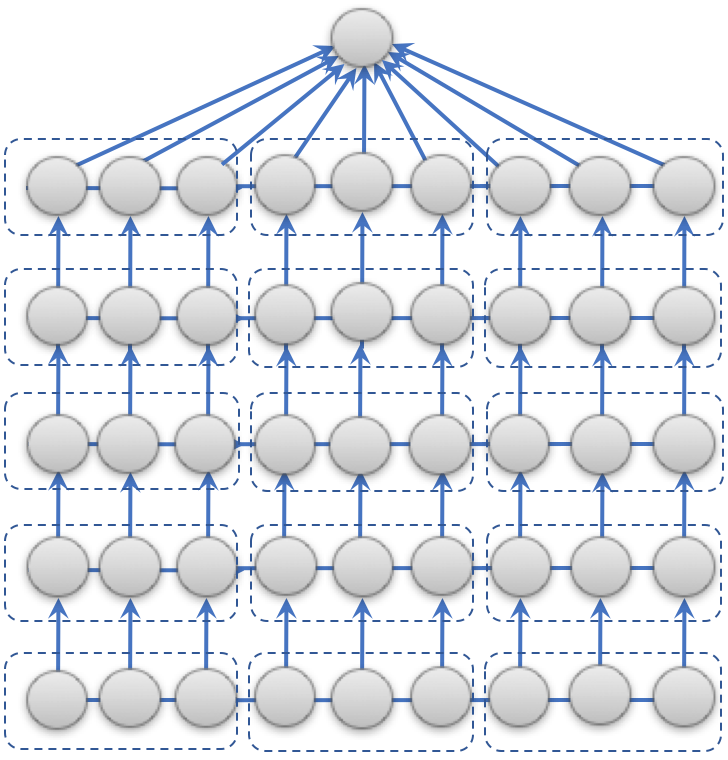}
		\caption{\small A workaround. \\Here sub-parts are indicated by (smaller) dotted rectangles.}
		\label{fig:workaround}
	\end{subfigure}
	\vspace{-0.15cm}
	\caption{A bad example for prior shortcut algorithms, and a workaround.}
	\label{fig:example}
\end{figure}

\paragraph{A Workaround.} We can improve the poor message complexity of aggregating within blocks on this particular network as follows. Partition each of the $D$ parts into \emph{sub-parts}, each with $O(D)$ connected nodes; we have $O(n/D)$ sub-parts in total. See \Cref{fig:workaround}. First, sub-parts aggregate: the right-most node in the sub-part broadcasts its value left and every other node broadcasts left the aggregation of its own value and what it receives from its neighbor to the right. The leftmost node of a sub-part then uses the block's edges to transmit the sub-part's aggregate value to $r$, which then computes the aggregate value for each part. Symmetrically to the above procedure, $r$ then broadcasts to every node the aggregate value for its part.

Aggregating within each sub-part requires $O(n)$ messages, as it requires each node to broadcast at most once. Moreover, there are $O(n/D)$ sub-parts, each responsible for broadcasting up and down the block once and so using the shortcut requires $O(n/D)\cdot O (D) = O(n)$ messages. Therefore, for this network, our workaround requires an optimal $O(m)=O(n)$ messages.

\subsection{Overview of Our Approach}
The workaround of the previous subsection is heavily tailored to the particular example of \Cref{fig:bad-message-complexity}. Moreover, it requires that nodes know significantly more about the network topology than we allow. However, the above example and workaround motivate and highlight some of the notable strategies of our algorithm for Part-Wise Aggregation.


\paragraph{Sub-Part Divisions.}As illustrated in the example, having all nodes use a shortcut in order to send their private information to their part leader rapidly exhausts our $\tilde{O}(m)$ message budget. To solve this issue, we refine the partition of our network into what we call a \emph{sub-part division}. In a sub-part division each part $P_i$ containing more than $D$ nodes is partitioned into $\tilde{O}(|P_i|/D)$ sub-parts each with a spanning tree rooted at a designated node termed the \emph{representative} of the sub-part. In the preceding example the representatives are the left-most nodes of each sub-part. Each sub-part uses its spanning tree to aggregate towards its representative, who then alone is allowed to use shortcut edges to forward the result toward the part leader. This decreases the number of nodes that use the shortcut from $O(n)$ to $\tilde{O}(n/D)$, thereby reducing the message complexity of aggregating within a block from $O(nD)$ to $\tilde{O}(n)$. Applying this observation and some straightforward random sampling ideas to previous work on low-congestion shortcuts to solve PA almost immediately implies our message-efficient randomized solutions to PA.

\paragraph{Message-Efficient (and Deterministic) Shortcut Construction.} If our algorithms are to use shortcuts as we did in the preceding example, they must construct them message efficiently; i.e., with $\tilde{O}(m)$ messages. No previous shortcut construction algorithm achieves low message complexity. We show that not only do sub-part divisions allow us to use shortcuts message efficiently, but they also allow us to construct shortcuts message efficiently. In particular, we give both randomized and deterministic message-efficient shortcut construction algorithms. The latter is the first round-optimal deterministic shortcut construction algorithm and is based on a divide-and-conquer strategy that uses heavy path decompositions \cite{sleator1983data}. Though the general structure of our deterministic shortcut construction algorithm is similar to that used in previous low-congestion shortcut work -- nodes try to greedily claim the shortcut edges they get to use -- the techniques used to deterministically implement this structure are entirely novel with respect to past work in low-congestion shortcuts.

\paragraph{Star Joinings.}
To use sub-part divisions as above, we must demonstrate how to compute them within our bounds. To do so, we begin with every node in its own sub-part and repeatedly merge sub-parts until the resulting sub-parts are sufficiently large.
However, it is not clear how many sub-parts can be efficiently merged together at once, as obtained sub-parts can have arbitrarily large diameter, rendering communication within a sub-part  infeasible. We overcome this issue by always forcing sub-parts to merge in a star-like fashion; this limits the diameter of the new sub-part, enabling the new sub-part to adopt the representative of the center of the star.
We call this behavior \emph{star joinings}. As we show, enforcing this behavior is easily accomplished with random coin flips. We also accomplish the same behavior deterministically but with significantly more technical overhead, drawing on the coloring algorithm of \citet{cole1986deterministic}.
\section{Solving PA}
\label{sec:techSet}
In this section we show how to solve PA, given shortcuts and a sub-part division.
The subroutines necessary to compute shortcuts and sub-part divisions randomly and deterministically within our time and message bounds are given in \Cref{sec:rando} and \Cref{sec:det}, respectively. Those subroutines together with our algorithm for PA given a sub-part division and shortcuts imply our main result, \Cref{thm:partComputation}. \shortOnly{Due to space constraints we only outline the algorithms here, and sketch their proofs of correctness and analysis, deferring formal pseudocode and proofs to \Cref{sec:appKLPA}.}

For our purposes it is convenient to assume that in our PA instance each part $P_i$ also has a \emph{leader} $l_i\in P_i$ where every $v \in P_i$ knows the ID of $l_i$. As we show in \Cref{sec:PASolns}, we can dispense with this assumption at the cost of logarithmic overhead in round and message complexity. As we ignore multiplicative polylogarithmic terms, for the remainder of the paper we assume that a leader for each part is always known in our PA instances. 

One of the crucial ingredients we will rely on to solve PA instances as above is sub-part divisions.

\begin{Def}[Sub-part division]
	Given partition $(P_i)_{i=1}^N$ of $V$, a \emph{sub-part division} is a partition of every part $P_i$ into $\tilde{O}\left(\frac{|P_i|}{D} \right)$ \emph{sub-parts} $S_1, \ldots, S_{k_i}$. Each sub-part $S_j$ also has a spanning tree of diameter $O(D)$ rooted at a node $r \in S_j$, termed the sub-part's \emph{representative}.
\end{Def}

\gdef\SPAndBlocksFig{
\begin{figure}[h]
	\centering
	\vspace{-0.4cm}
	\includegraphics[scale=0.15, clip]{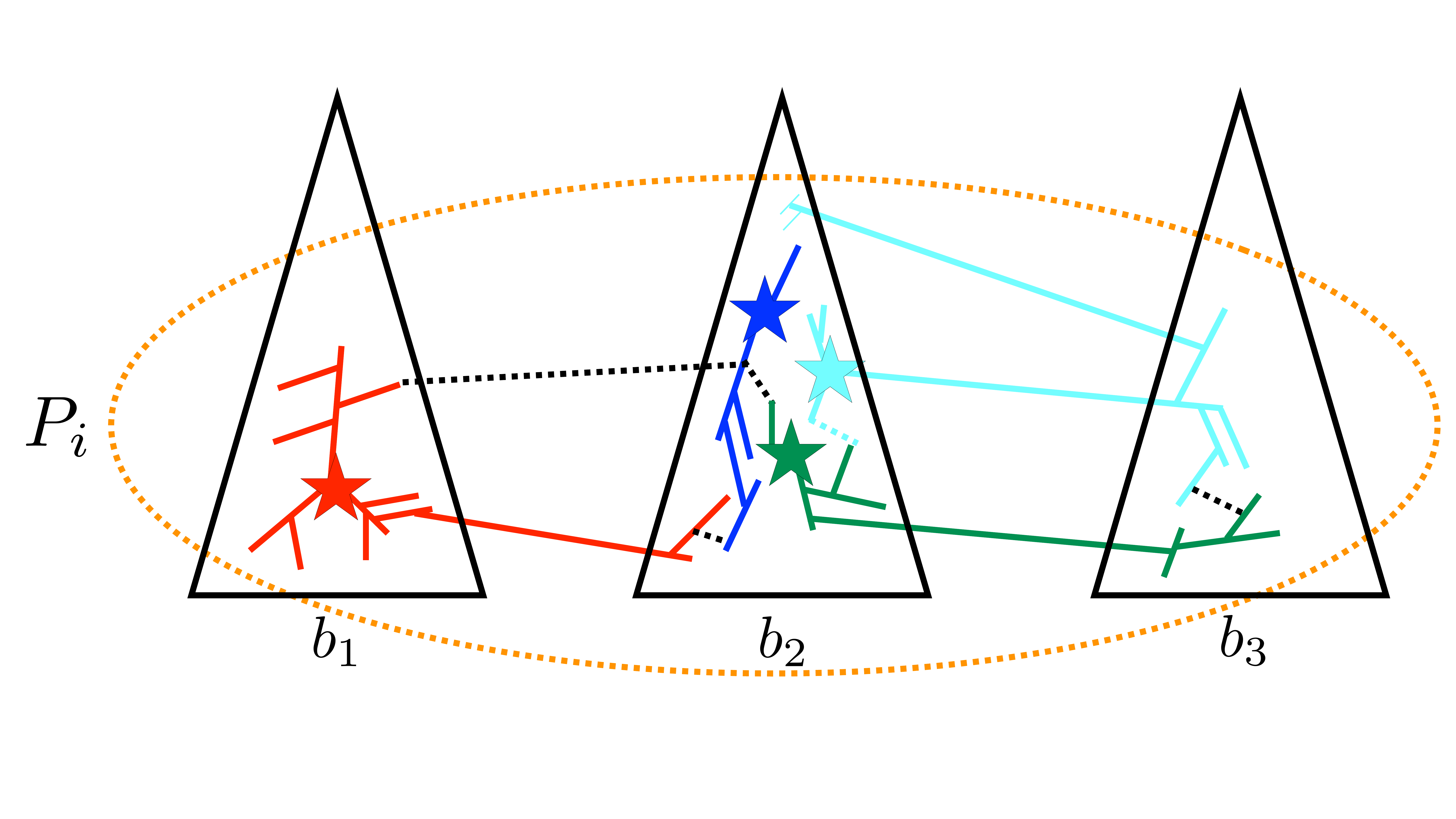}
	\vspace{-.5cm}
	\caption{A division of a part, $P_i$, incident to blocks $b_1$, $b_2$ and $b_3$, into 4 sub-parts. Sub-part representatives: stars. Solid colored lines: edges in the tree of each sub-part according to the color of the  representative. Dashed lines: edges in $E$ between sub-parts.}
\label{fig:subPartAndBlocks}
\end{figure}
}\SPAndBlocksFig

\noindent We note that sub-parts are not necessarily related to blocks in any way; e.g.\ a single sub-part might span multiple blocks and blocks need not contain sub-part representatives. See \Cref{fig:subPartAndBlocks}\fullOnly{ for an illustration of how sub-parts and blocks might interact}.



\fullOnly{\subsection{Aggregating on Families of Sub-trees}}
The second ingredient we rely on is tree-restricted shortcuts, along which we will route (some of) our messages. To do so, we must first restate an algorithm of \citet*{haeupler2016low} which we refer to as \blockroute, that convergecasts/broadcasts within shortcut blocks. As convergecast and broadcast are symmetric, we only discuss convergecast.

\begin{restatable}{lem}{blockRoute}(\cite[Lemma 2]{haeupler2016low})
	\label{lem:subTreeRouting}
Let $T$ be a tree of depth $D$. Given a family of subtrees such that any edge of $T$ is contained in at most $c$ subtrees, there is a deterministic
algorithm that can perform a convergecast/broadcast on all
of the subtrees in $O(D + c)$ \congest{} rounds.

Specifically, for convergecasts, if multiple messages are 
scheduled over the same edge, the algorithm forwards the
packet with the smallest depth of the subtree root, breaking
ties with the smallest ID of the subtree.
\end{restatable}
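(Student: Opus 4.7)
The plan is to implement a pipelined convergecast driven by the priority rule stated in the lemma and to analyze its runtime via a standard dilation-plus-congestion argument specialized to trees.

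\textbf{Algorithm.} For each subtree $S_i$ and each node $v\in S_i$, $v$ maintains a partial aggregate $a_{i,v}$, initialized to its own value. The aggregate becomes \emph{ready} once $v$ has received $a_{i,u}$ from every child $u$ of $v$ inside $S_i$; at that moment $v$ folds these values (together with its own) using $f$. A ready non-root node wishes to forward $a_{i,v}$ to its parent in $S_i$ along the corresponding edge of $T$. In each round, along each outgoing tree edge, a node sends the highest-priority pending message under the stated rule (smallest depth of the subtree root, ties broken by smallest subtree ID); the remaining ready messages wait. Broadcast is implemented symmetrically by reversing the direction of message flow.

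\textbf{Correctness.} A simple induction on the height of $v$ within $S_i$ shows that once $a_{i,v}$ becomes ready, it equals $f$ evaluated over the values of $v$ together with all descendants of $v$ in $S_i$. Hence as soon as $a_{i,r_i}$ is ready at the root $r_i$ of $S_i$, the convergecast on $S_i$ is complete.

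\textbf{Runtime.} Because every internal node aggregates before forwarding, each subtree $S_i$ places at most one in-flight message on any edge of $T$, so over the entire execution any edge $e\in E[T]$ carries at most $c$ messages (one per subtree containing $e$). Each such message travels along a root-ward path of length at most $D$. Thus we are in the classical pipelined-routing regime with dilation $D$ and congestion $c$ under a greedy priority rule. To bound the total delay suffered by a token $m_i$ (the aggregate of $S_i$), one exploits that in the tree $T$, the root-ward path of $m_i$ and the root-ward path of any blocking token $m_j$ intersect in a single contiguous sub-path; once $m_j$ has moved one step ahead of $m_i$ on this shared segment they march in lock-step, so $m_j$ never blocks $m_i$ again during that encounter. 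A standard delay-sequence/charging argument against the congestion bound then yields a total delay of $O(c)$ per token, for an overall runtime of $O(D+c)$ rounds.

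\textbf{Main obstacle.} The delicate part is the delay accounting. A naive per-edge count of blockers would permit up to $c-1$ blockings at each of up to $D$ edges of $m_i$'s path, suggesting a loose $O(Dc)$ bound. Upgrading to $O(D+c)$ requires the tree-path-intersection observation above, so that repeated blockings of $m_i$ by the same higher-priority token are ruled out by lock-step behavior; combined with the shallowest-root priority rule — which guarantees that a chain of blockings bottoms out at a token that experiences no delays at all — each unit of delay can be charged to a distinct higher-priority token sharing an edge with $m_i$'s path, and the edge-congestion bound caps the total at $O(c)$.
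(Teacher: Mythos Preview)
The paper does not prove this lemma --- it is restated from \cite{haeupler2016low} and used as a black box --- so there is no in-paper argument to compare against.

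Your approach is the standard one and the outline is correct, but the delay accounting has a gap. The lock-step claim is false as stated: a higher-priority token $m_j$ can itself be blocked by some $m_k$, stall, and then block $m_i$ again, so ``$m_j$ delays $m_i$ at most once'' does not hold. And charging each delay to ``a distinct higher-priority token sharing an edge with $m_i$'s path'' only yields an $O(Dc)$ bound, since that path may have up to $D$ edges, each shared with up to $c$ subtrees. (In fact, once the backward trace switches to tracking $m_j$ it follows $S_j$'s children, so it need not even stay inside $S_i$.) What actually pins the count to $c$ is a sharper consequence of the shallowest-root rule. Trace the delay sequence backward from $r_i$: every token you switch to has its subtree root at depth at most $\operatorname{depth}(r_i)$ (by priority), yet is encountered at a node that is a descendant of the child $u_1$ of $r_i$ on the traced path (since the trace only moves downward). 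By connectivity, every such blocker's subtree must therefore contain the single edge $(u_1,r_i)$. All blockers thus share that one edge, and the congestion bound on it alone caps the number of priority switches at $c-1$. With this refinement your argument does give $O(D+c)$.
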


One observation we make about this algorithm, and which will prove crucial since we only allow representatives to use shortcuts, is the following.
\begin{obs}\label{obs:blockroutemessage}
Let $S$ be the set of nodes with a value to be convergecasted in the algorithm described in \Cref{lem:subTreeRouting}. Then the number of messages used by the algorithm is $O(|S|D)$.
\end{obs}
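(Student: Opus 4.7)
The plan is to bound the number of messages sent by BlockRoute by directly charging messages to the individual packets being routed, rather than reasoning round-by-round.

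First, I would observe that since only nodes in $S$ hold values to be convergecasted, BlockRoute has exactly $|S|$ distinct packets in the system — one per node in $S$. Each such packet originates at some $v \in S$ and must be delivered to the root of the unique subtree in the family that contains $v$. The scheduling rule stated in \Cref{lem:subTreeRouting} forwards packets strictly upward along $T$: in each round a packet either waits at its current node or advances one hop toward the root of $T$. Consequently, the edge sequence traversed by a given packet is a contiguous upward path in $T$ from its source to the root of its subtree.

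Next, I would use the depth bound on $T$ to cap the cost per packet. Since $T$ has depth $D$ and packets move monotonically upward, each packet traverses at most $D$ tree edges before being delivered. Each such hop corresponds to exactly one \congest{} message, so each packet incurs at most $D$ messages. Summing over the $|S|$ packets yields the claimed total of $O(|S| D)$ messages.

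The only subtlety worth explicitly checking is that the scheduling rule — which can delay a packet at an intermediate node when a competing packet from a shallower subtree root wins priority — never causes the same packet to be retransmitted over an already-traversed edge. This is immediate from the monotone upward motion of packets together with the fact that $T$ is a tree, so a packet's trajectory is a simple path that visits each edge at most once. Hence no careful amortization is required, and the bound follows essentially by counting.
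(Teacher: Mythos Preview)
Your argument is correct and is the natural justification; the paper itself states this as an observation without proof, so there is nothing substantive to compare against. One tiny imprecision: in a convergecast, packets from the same subtree aggregate when they meet, so there are not literally $|S|$ distinct packets in flight throughout --- but since aggregation can only decrease the message count, your per-packet charge of at most $D$ hops still yields a valid upper bound of $O(|S|D)$.
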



\gdef\blockRoutePC{
\begin{algorithm}
\begin{algorithmic}[1]
\Statex \textbf{Input}: Blocks given by $B_i(v)$; 
\Statex  \color{white}\textbf{Input}: \color{black} representatives for each part $(R_i)_{i=1}^N$.
	
\Statex \textbf{Output}: $v \in R_i$ learns $f(B_i(v) \cap R_i)$.
\Statex \textbf{Notation}: $\kappa_i(v) = \mathds{1}(v \in R_i) + \sum_{v' \text{ $T$-child of $v$}} \mathds{1}(\text{$v'$ has a descendant in $T$ in $R_i$})$.
\For {$v \in V$}
\For{$i \in [N]$}
\State $S_{i, v} \leftarrow \{v\} \cap R_i$. \Comment{We overload ``$v$'' with $v$'s input value}
\State $O_v \leftarrow \{\}$.\Comment{$v$'s outstanding messages}
\If {$v \in R_i \wedge \kappa_i(v) == 1 \wedge v \text{ not $B_i(v)$ root}$}
\State $O_v \leftarrow O_v \cup \{(f(S_{i, v}, i))\}$.
\EndIf
\EndFor
\For{$O(D + c)$ rounds}
	\If{$|O_v| > 0$}
	\State $\hat{i } \leftarrow \argmin_{i : (f(S_{i,  v}), i) \in O_v} T\text{-depth}(\text{root of } B_i(v))$.
	\State Send $(f(S_{\hat{i}, v}), \hat{i})$ to parent of $v$ in $T$.
	\State $O_v \leftarrow O_v \setminus (f(S_{\hat{i}, v}), \hat{i})$.
	\EndIf
\For {$(o, i)$ received}
	\State $S_{i, v} \leftarrow S_{i, v} \cup \{o\}$.
	\If{$|S_{i, v}| == \kappa_i(v) \wedge \text{$v$ not $B_i(v)$ root}$}
	\State $O_v \leftarrow O_v \cup \{(f(S_{i, v}), i)\}$.
	\EndIf
\EndFor

\EndFor 
\EndFor
\State Each root $r$ of each block $B_i(r)$ broadcasts $f(B_i(r) \cap R_i)$ symmetrically to how it was calculated.
\caption{Aggregation within blocks.}
\label{alg:blockRoute}
\end{algorithmic}
\end{algorithm}
}


%

\fullOnly{\subsection{Solving PA and Verifying the Block Parameter}}
We now show how given a sub-part division and a $T$-restricted shortcut, we can round- and message-efficiently solve PA with and without randomization. Our method is given by \calg{alg:KLPASolve} (which contains both our deterministic and randomized algorithm), and works as follows. First, each leader $l_i$ of part $P_i$ broadcasts an arbitrary message $m_i$ to all nodes in $P_i$. Then, symmetrically to how $m_i$ was broadcast, each $l_i$ computes $f(P_i)$ and then broadcasts $f(P_i)$ to all nodes of $P_i$. The most technically involved aspect of our algorithm is how $l_i$ broadcasts $m_i$ to all nodes in $P_i$. If $|P_i|$ is smaller than $D$, broadcast can be trivially performed along the spanning tree of the single sub-part of $P_i$ in $O(D)$ rounds with $O(|P_i|)$ messages. However, if $|P_i|$ is larger than $D$, we use shortcuts, as follows.

For our deterministic algorithm, we repeat the following $b$ times: every representative in a 
block which received the message $m_i$ spreads $m_i$ to other representatives in its block using \blockroute along the shortcut. Next, 
representatives with $m_i$ spread $m_i$ to nodes in their sub-part. Lastly, nodes with 
$m_i$ spread $m_i$ to neighboring nodes in adjacent sub-parts. Crucially, only our 
representatives use shortcuts, thereby limiting our message complexity, by 
\Cref{obs:blockroutemessage}. We illustrate the broadcast of $m_i$ in 
\Cref{fig:algoIterations}\shortOnly{ in \Cref{sec:appKLPA}}. 

Our randomized algorithm works similarly, with the following modification: each part leader 
independently delays itself -- and subsequently, its entire part -- before sending its first message at the beginning of the algorithm, by a delay chosen uniformly in the 
range $[c]$ (here $c$ is the shortcut's congestion). This limits the number of parts which would use any given edge during any round to $O(\log n)$ w.h.p. As only one message can be sent along an edge, we execute \blockroute as before, but rather than break ties as in \Cref{lem:subTreeRouting}, we simply spend 
$O(\log n)$ rounds between each ``meta-round'', to allow each node to forward 
all of its $O(\log n)$ messages. This broadcast within blocks requires $O(D\log n)$ \congest rounds. 

The following lemma states the performance of 
our algorithms.
\gdef\KLPAWithSPDAndSC{
	
\begin{algorithm}
	\caption{PA given shortcut and sub-part division.}
	\label{alg:KLPASolve}
	\begin{algorithmic}[1]
		\Statex \textbf{Input}: PA instance; 
		\Statex  \color{white}\textbf{Input}: \color{black} 
		sub-part division;
		\Statex  \color{white}\textbf{Input}: \color{black} 
		$T$-restricted shortcut
		\Statex \textbf{Output}: solves PA

			\Statex \textbf{Notation}: $S(v)\subseteq V$:= $v \in V$'s sub-part;
				\Statex \color{white}\textbf{Notation}: \color{black} $r(v)\in V$:= $v \in V$'s representative;
		\Statex  \color{white}\textbf{Notation}: \color{black}
		$S(U) : =\bigcup_{u\in U} S(u)$ for $U\subseteq V$;
		\Statex  \color{white}\textbf{Notation}: \color{black}
		$R(U):=\{r(U) \mid u\in U\}$ for $U\subseteq V$;
		\Statex  \color{white}\textbf{Notation}: \color{black}
		$R_i :=r(P_i)$;
		\Statex  \color{white}\textbf{Notation}: \color{black}
		$B_i(v) \subseteq V:=$ the block of part $P_i$ containing $v \in V$

		\Statex
		\For{Part $P_i$}\label{line:KLPAfirst}
		\If{$|P_i| < D$} 
		\State Broadcast $m_i$ from $l_i$ to all of $P_i$ along $P_i$'s spanning tree.
		\Else
		\If{Randomized algorithm}
			\State Delay part $P_i$ by (independent) $\sim U(c)$;
			\State Blow up subsequent calls to \blockroute by $O(\log n)$.
		\EndIf 
		\State Route $m_i$ from $l_i$ to $r(l_i)$ using \blockroute.  \label{line:routeInit}
		\State $\mathcal{A} \leftarrow \{r(l_i)\}$, $\mathcal{I} \leftarrow \{\}$. \Comment{Initialize sets of ``active''/``inactive'' representatives.}
		\For{$b$ iterations}
		\State Run \blockroute on $\mathcal{A}$ to send $m_i$ to all nodes in $\bigcup_{r \in \mathcal{A}} B_i(r) \cap R_i$. \label{line:br}
		\State $\mathcal{A} \leftarrow \mathcal{A} \bigcup_{r \in \mathcal{A}} B_i(r) \cap R_i$.
		\ForAll{$r \in \mathcal{A}$}
			\State Broadcast $m_i$ from $r$ to $S(r)$ along $S(r)$'s representing tree. \label{line:bcastInSP}
		\EndFor
		\State Broadcast $m_i$ over edges in $E$ that exit sub-parts in $S(\mathcal{A})$. \label{line:bcastAcrossSP}
		\ForAll{Vertex $v\not \in S(\mathcal{A}) \cup S(\mathcal{I})$}
			\If{$v$ received a message in line \ref{line:bcastAcrossSP}}
				\State $v$ routes $m_i$ to $r(v)$. \label{line:route}
			\EndIf
		\EndFor
		\State $\mathcal{I} \leftarrow \mathcal{I} \cup \mathcal{A}$. \label{line:inact}
		\State $\mathcal{A} \leftarrow$ representatives that received a message in line \ref{line:route}.\label{line:KLPAlast}
		\EndFor
		\EndIf
		\EndFor
		\State Symmetrically to lines \ref{line:KLPAfirst}-\ref{line:KLPAlast}, compute $f(P_i)$ at $l_i$. \label{line:computef}
		\State Symmetrically to lines \ref{line:KLPAfirst}-\ref{line:KLPAlast}, broadcast result of $f(P_i)$ from $l_i$ to all nodes in $P_i$. \label{line:bcastf}

	\end{algorithmic}
\end{algorithm}

}
\fullOnly{\KLPAWithSPDAndSC}

\gdef\algoIterationsFig{
\begin{figure}[h]
\centering
	\includegraphics[scale=0.35]{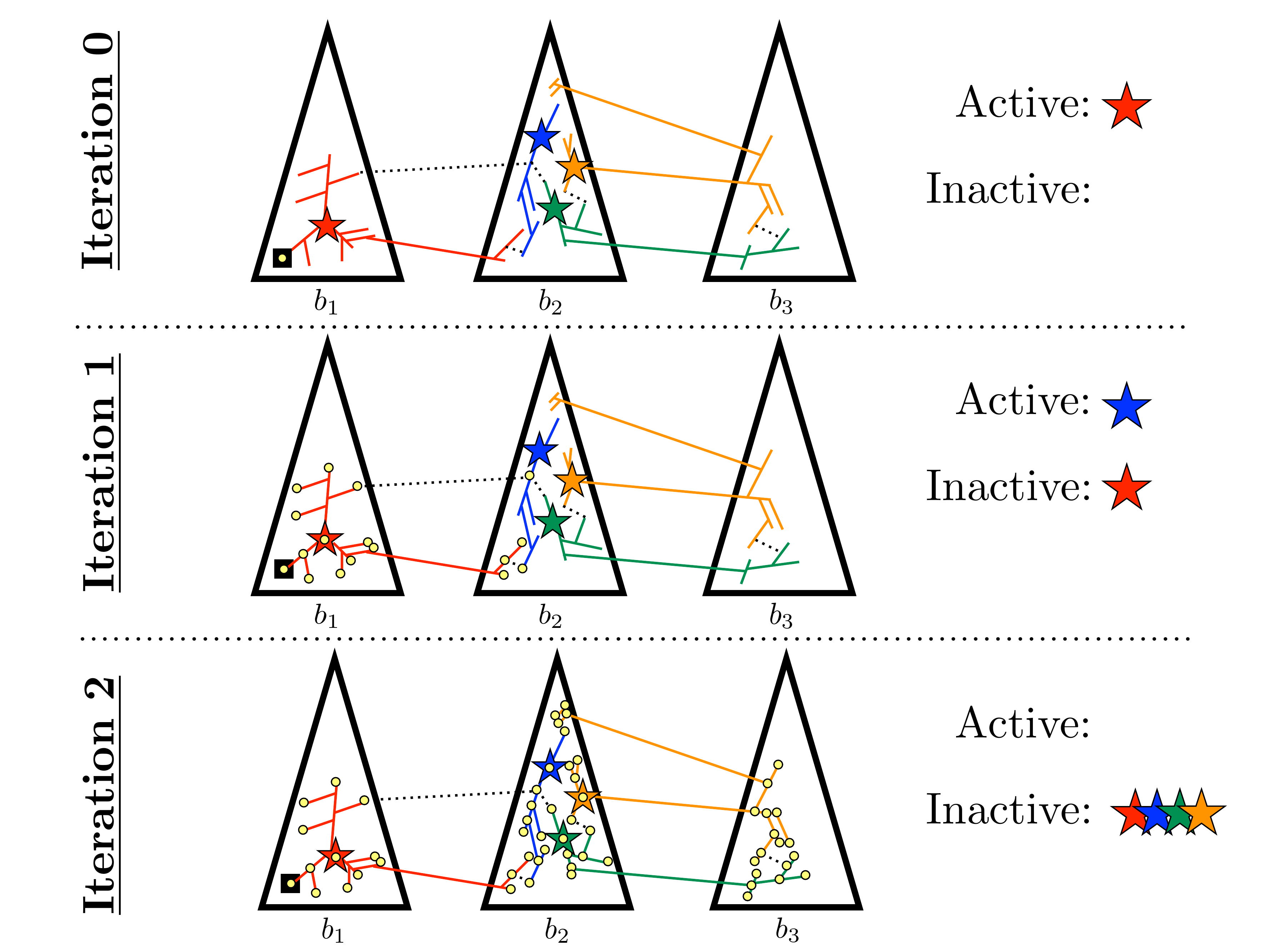}
	\caption{Nodes with $m_j$ (yellow circles) and (in)active representatives at the end of each broadcast iteration for a part with 3 blocks $b_1$, $b_2$ and $b_3$. The leader $l_i$ is indicated by a black square, $l_i$, while sub-part representatives are indicated by stars; solid lines and dotted black lines correspond to intra- and inter-sub-part edges.}
	\label{fig:algoIterations}
\end{figure}
}
\fullOnly{\algoIterationsFig}

\begin{restatable}{lem}{solveKLPA}
\label{lem:SPClusF}
Given a sub-part division and a $T$-restricted shortcut with congestion $c$ and block parameter $b$, \calg{alg:KLPASolve} uses $\tilde{O}(m)$ messages to solve PA either w.h.p in $\tilde{O}(bD + c)$ rounds or deterministically in $\tilde{O}(b(D+c))$ rounds.
\end{restatable}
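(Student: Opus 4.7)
The plan is to separately establish correctness, round complexity, and message complexity, for both variants of \calg{alg:KLPASolve}.

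\textbf{Correctness.} I would argue by induction on the iteration count of the main loop that after $j$ iterations of lines \ref{line:br}--\ref{line:KLPAlast}, every representative of $P_i$ whose block lies within block-distance $j$ of $r(l_i)$'s block (in the auxiliary graph whose vertices are the blocks of $P_i$ and whose edges connect two blocks joined by an edge in $E[P_i]$ between their sub-parts) knows $m_i$, together with every vertex in a sub-part whose representative has ever been active. The base case follows from Line \ref{line:routeInit}. In the inductive step, the \blockroute invocation on Line \ref{line:br} disseminates $m_i$ to every representative in each currently active block by \Cref{lem:subTreeRouting}; Line \ref{line:bcastInSP} pushes $m_i$ along the spanning tree of each active representative's sub-part; Line \ref{line:bcastAcrossSP} crosses a single in-part edge to reach adjacent sub-parts; and Line \ref{line:route} carries $m_i$ to those fresh sub-parts' representatives, enlarging $\mathcal{A}$ for the next iteration. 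Since $P_i$ contains at most $b$ blocks, $b$ iterations saturate $P_i$. The symmetric aggregation and broadcast-back phases on lines \ref{line:computef}--\ref{line:bcastf} are correct by the same induction, using commutativity and associativity of $f$ to combine partial aggregates.

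\textbf{Round complexity.} In the deterministic variant, each of the $b$ iterations performs one \blockroute (costing $O(D+c)$ by \Cref{lem:subTreeRouting}) and three phases each costing $O(D)$ because sub-parts have diameter $O(D)$, so the total is $\tilde{O}(b(D+c))$. For the randomized variant, since the shortcut congestion is $c$, at most $c$ distinct parts can ever contend for any given edge; with each part picking an independent offset in $[c]$, a Chernoff bound combined with a union bound over edges and rounds shows that at most $O(\log n)$ parts request transmission on any given edge in any given round w.h.p. Applying a uniform $O(\log n)$ slowdown lets all competing messages traverse each edge, so each iteration runs in $\tilde{O}(D)$ rounds; together with the additive $O(c)$ initial delay this yields $\tilde{O}(bD+c)$.

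\textbf{Message complexity.} By \Cref{obs:blockroutemessage}, each \blockroute call on source set $\mathcal{A}$ sends $O(|\mathcal{A}| \cdot D)$ messages. Crucially, each representative appears as a \blockroute source in at most one iteration, because the algorithm immediately moves it from $\mathcal{A}$ into $\mathcal{I}$ afterward. Summing over iterations the total source count for part $P_i$ is at most the number of its representatives, $\tilde{O}(|P_i|/D)$, so \blockroute contributes $O(|P_i|)$ messages within $P_i$ and $O(n)$ overall. Intra-sub-part broadcast and routing on Lines \ref{line:bcastInSP} and \ref{line:route} cost $O(|S|)$ messages per sub-part $S$ per activation, and each sub-part is activated at most once, so the global total is $O(n)$. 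The cross-sub-part step on Line \ref{line:bcastAcrossSP} uses at most one message per in-part edge per iteration, but only newly-informed sub-parts retransmit, so the total is $O(m)$. Summing gives $\tilde{O}(m)$; the three repetitions of the framework (initial broadcast, aggregation, result broadcast) and the $O(\log n)$ slowdown of the randomized variant are absorbed by $\tilde{O}$.

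The subtlest step will be the randomized round-complexity argument: I must verify not only that independent offsets in $[c]$ genuinely cap per-edge per-round contention at $O(\log n)$ w.h.p., but also that when many parts' \blockroute procedures run concurrently under the uniform $O(\log n)$ slowdown, the priority and subtree invariants asserted by \Cref{lem:subTreeRouting} continue to hold across cross-part interference. The remaining pieces reduce to careful bookkeeping.
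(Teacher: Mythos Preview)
Your proposal is correct and follows essentially the same approach as the paper: induction over blocks for correctness, summing per-iteration subroutine costs for round complexity (with the Chernoff-plus-union-bound argument for the randomized delays), and the key observation that each sub-part is active in at most one iteration to bound the message complexity via \Cref{obs:blockroutemessage}. The concern you flag about cross-part interference in the randomized variant is handled in the paper exactly as you suggest, by an $O(\log n)$ blow-up per meta-round so that all contending parts can forward their aggregates.
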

\gdef\solveKLPAProof{
\begin{proof}
We first prove our round complexities. We start by proving the stated round complexity for broadcasting $m_i$. 
Any part that is of fewer than $D$ nodes clearly only requires $O(D)$ rounds. For any part of more than $D$ nodes, 
we argue that each of the $b$ iterations requires only $\tilde{O}(D+c)$ rounds or $O(D \log n)$ if a random delay of $U(c)$ is added. 
Routing $m_i$ from $l_i$ to $r(l_i)$ requires $O(D)$ rounds. Running \blockroute only requires $O(D+c)$ rounds by \Cref{lem:subTreeRouting}. Moreover, if a random delay is added, 
a Chernoff and union bound show that w.h.p an edge never has more than $O(\log n)$ distinct parts' aggregate messages that should be transmitted along it. 
By allowing each node to send up to $O(\log n)$ parts' aggregate message in each meta-round, \blockroute requires $O(D \log n)$ rounds, and therefore this approach requires $\tilde{O}(bD+c)$ rounds overall. 
Next, broadcasting $m_i$ within any sub-part requires $\tilde{O}(D)$ rounds as sub-parts are of diameter $\tilde{O}(D)$. 
Broadcasting $m_i$ to adjacent subparts requires only a single round. Lastly, computing $f(P_i)$ and broadcasting $f(P_i)$ symmetrically require $\tilde{O}(b(D+C))$ rounds.

We now prove a message complexity of $\tilde{O}(m)$. We start by proving this message complexity for broadcasting $m_i$. 
Message complexity is trivial if the part is of fewer than $D$ nodes. Next consider parts of more than $D$ nodes. 
Notice that nodes in a given sub-part only send messages in those rounds where the sub-part is active. 
Moreover, once a sub-part becomes inactive, it never again becomes active. 
Routing $m_i$ from $l_i$ to $r(l_i)$ requires $O(D)$ messages. 
Moreover, each of the $\tilde{O}\left(\frac{|P_i|}{D}\right)$ sub-parts in part $P_i$ use \blockroute at most once, using $O(D)$ messages per sub-part by \Cref{obs:blockroutemessage}; 
as a result this step uses $\tilde{O}(|P_i|)$ messages for part $P_i$ and so $\tilde{O}(n)$ messages total. 
Broadcasting within all sub-parts requires $O(n)$ messages since each sub-part only does so once and has a spanning tree with which to do so. 
Broadcasting across sub-parts uses each edge at most twice and so uses $O(m)$ messages. 
Lastly, computing $f(P_i)$ and broadcasting $f(P_i)$ symmetrically require $O(m)$ messages.

Correctness of broadcasting $m_i$ is trivial if $|P_i| < D$. 
Moreover, if $|P_i| \geq D$, a simple argument by induction over blocks shows that $b$ iterations suffices for parts of more than $D$ nodes. 
Correctness of computing $f(P_i)$ and broadcasting $f(P_i)$ symmetrically follow.
\end{proof}
}
\fullOnly{\solveKLPAProof}
\shortOnly{
\begin{proof}[Proof (Sketch)]
The deterministic round complexity follows by summing the round complexities of sub-routines. The randomized round complexity results similarly (taking into account the delay of $c$ rounds), but requires noting that a Chernoff and union bound shows that w.h.p.~no more than $O(\log n)$ messages are scheduled over any edge in any given meta round. The only nuance to message complexity is that when \blockroute is run, each of the $\tilde{O}\left(1+\frac{|P_i|}{D}\right)$ sub-parts use $O(D)$ messages, by \Cref{obs:blockroutemessage}. It remains to prove this algorithm's correctness.
Correctness of broadcasting $m_i$ is trivial if $|P_i| < D$. 
Moreover, if $|P_i| > D$, a simple argument by induction over blocks shows that $b$ iterations suffices for parts of more than $D$ nodes. 
Correctness of computing $f(P_i)$ and broadcasting $f(P_i)$ symmetrically follow.
\end{proof}
}

Because our PA algorithm is essentially the same algorithm we use to verify that our shortcuts have good block parameter, we now describe this second algorithm. We verify the block parameter of a fixed part $P_i$ as follows. Run \calg{alg:KLPASolve} to broadcast an arbitrary message $m_i$. If our block parameter is sufficiently small then every node will receive $m_i$ and assume it as such. Moreover, if our block parameter is too large but \calg{alg:KLPASolve} succeeds we can still use \calg{alg:KLPASolve} to inform all nodes in $P_i$ of $P_i$'s block parameter symmetrically to how $m_i$ was broadcast. However, if \calg{alg:KLPASolve} fails --i.e. some node does not receive $m_i$ -- then we must somehow inform all nodes that the block parameter is too large. We do so by having each node that does not receive $m_i$ inform its neighbors in $P_i$ that it did not receive $m_i$. There must be some such neighbor in $P_i$ that did receive $m_i$. By one additional call to \calg{alg:KLPASolve} this neighbor can inform all nodes that did receive $m_i$ that the block parameter is, in fact, too large. This algorithm gives the following lemma.


\gdef\verifyBlockPC{
\begin{algorithm}
	\caption{Block parameter verification.}
	\label{alg:verifyBlock}
	\begin{algorithmic}[1]
	\Statex \textbf{Input}: partition of $V$, $(P_i)_{i=1}^N$, where $v \in P_i$ knows leader $l_i$; 
	\Statex  \color{white}\textbf{Input}: \color{black}
	sub-part division; 
	\Statex  \color{white}\textbf{Input}: \color{black}
	$T$-restricted shortcut; 
	\Statex  \color{white}\textbf{Input}: \color{black}
	desired block parameter $b$ 
	\Statex \textbf{Output}: for every $P_i$, $v \in P_i$ learns if $P_i$ has block parameter $b$ in the input shortcut
	\For{part $P_i$}
	\State Run \calg{alg:KLPASolve} to broadcast arbitrary message $m_i$ from $l_i$.
	\For{$v \in P_i$ that did not receive $m_i$}
		\State $v$ broadcasts $\bar{m}_i$ to neighbors in $P_i$. \label{line:tellNeighSad}
	\EndFor
	\State Run \calg{alg:KLPASolve} to broadcast if a node that received $m_i$ also received $\bar{m}_i$.
	\For{every $i$ and $v \in P _i$}
		\If{$v$ did not receive $m_i$ or received $\bar{m}_i$}
		\State $v$ decides block parameter of $P_i$ exceeds $b$.
		\Else{ Run \calg{alg:KLPASolve} to compute the block number of $P_i$.}
		\EndIf
	\EndFor		
	\EndFor	
	\end{algorithmic}
\end{algorithm}
}\fullOnly{\verifyBlockPC}

\begin{restatable}{lem}{verifyb}
\label{lem:verifyb}
Given parts $(P_i)_{i=1}^N$, a sub-part division, a $c$-congestion $T$-restricted shortcut, $\mathcal{H}$, and desired block parameter $b$, one can deterministically (resp.,  w.h.p.) inform every node whether its part's block parameter in $\mathcal{H}$ exceeds $b$ in $\tilde{O}(b(D+c))$ (resp.  $\tilde{O}(bD + c)$) rounds with $\tilde{O}(m)$ messages.
\end{restatable}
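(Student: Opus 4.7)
The plan is to follow the verification algorithm given in the excerpt, which executes a constant number of invocations of \calg{alg:KLPASolve} with loop parameter $b$, interleaved with one round of local signalling inside each part. Specifically, I would first invoke \calg{alg:KLPASolve} to broadcast an arbitrary token $m_i$ from each leader $l_i$; then let every node in $P_i$ that did not receive $m_i$ send a flag $\bar m_i$ to its neighbors in $P_i$; then invoke \calg{alg:KLPASolve} a second time to OR-aggregate, within each $P_i$, the bit ``$v$ received $\bar m_i$''; and finally, for parts that report no such $\bar m_i$, one further call of \calg{alg:KLPASolve} computes the exact block number.

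For complexity, each of the $O(1)$ invocations of \calg{alg:KLPASolve} costs $\tilde O(bD+c)$ rounds w.h.p.\ (respectively $\tilde O(b(D+c))$ deterministically) and $\tilde O(m)$ messages by \Cref{lem:SPClusF}, and the intermediate signalling step costs one round and $O(m)$ messages; summing yields the claimed bounds. The nuance here is that \calg{alg:KLPASolve} performs exactly $b$ spreading iterations regardless of whether every node of $P_i$ is eventually covered, so the bound of \Cref{lem:SPClusF} applies verbatim even when the true block parameter of $P_i$ exceeds $b$.

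For correctness I would argue case-by-case on the true block parameter of a fixed $P_i$. If it is at most $b$, then \calg{alg:KLPASolve} reaches all of $P_i$, no node emits $\bar m_i$, the OR-aggregation delivers ``no $\bar m_i$'' to every node of $P_i$, and the final PA call returns the exact block number as required. Otherwise, let $U \subseteq P_i$ be the nonempty set of nodes that did not receive $m_i$ and $W = P_i \setminus U$ its complement; then $W$ is exactly the subset of $P_i$ reachable within $b$ block-spreading iterations from $l_i$. Each $u \in U$ immediately outputs ``exceeds $b$''. Since $P_i$ is connected, some edge of $G[P_i]$ crosses the $U$--$W$ boundary, so at least one $w \in W$ receives $\bar m_i$ from a neighbor; the second invocation of \calg{alg:KLPASolve} runs with the same activation dynamics as the first and thus reaches the same set $W$, correctly OR-aggregating the ``exceeds $b$'' verdict to every node of $W$.

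The main subtlety I expect is justifying that the second invocation of \calg{alg:KLPASolve} has the same reachable set $W$ as the first, and that the aggregation is faithfully performed over $W$ even though $P_i$ is not fully covered. This reduces to the structural observation that the activation schedule of representatives inside \calg{alg:KLPASolve} depends only on the partition and the $T$-restricted shortcut, not on the payload being spread. Once that is in hand, $W$ is the ``$b$-block ball'' around $l_i$ and has block parameter at most $b$ with respect to the induced subinstance, so \calg{alg:KLPASolve} is correct on $W$, and everything else is routine bookkeeping.
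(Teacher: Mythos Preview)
Your proposal is correct and follows essentially the same approach as the paper's proof, which also invokes \calg{alg:KLPASolve} a constant number of times and appeals to \Cref{lem:SPClusF} for the complexity bounds. Your treatment is in fact more careful than the paper's: you explicitly isolate and justify the subtlety that the second invocation of \calg{alg:KLPASolve} reaches exactly the same set $W$ as the first (because the activation dynamics depend only on the partition, sub-part division, and shortcut, not on the payload), whereas the paper's proof simply asserts that ``all nodes will either be told by $l_i$ that the block parameter is larger than $b$ or they will not receive $m_i$'' without spelling out why the second run covers all of $W$.
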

\gdef\verifybProof{
\begin{proof}

Round and message complexities follow trivially from \Cref{lem:SPClusF} and \Cref{lem:subTreeRouting}. We now argue correctness. If a node does not receive $m_i$ when \calg{alg:KLPASolve} is first run then the block parameter of $P_i$ is certainly larger than $b$. When this occurs, all nodes will either be told by $l_i$ that the block parameter is larger than $b$ or they will not receive $m_i$, implicitly informing them that the block parameter of $P_i$ is larger than $b$. If all nodes receive $m_i$, then $l_i$ clearly distributes to all nodes in $P_i$ the number of blocks incident to $P_i$ and so the block number of $P_i$ is correctly determined to be above or below $b$ as desired.
\end{proof}
}
\fullOnly{\verifybProof}
\shortOnly{
}


\section{Randomized Subroutines}
\label{sec:rando}
In this section we outline how we construct sub-part divisions and shortcuts round- and message-optimally using randomization. \shortOnly{Pseudocode and full proofs are deferred to \Cref{sec:appDet}.} 

\fullOnly{\subsection{Computing Sub-Part Divisions with Randomization}}

We first show how a sub-part division can be computed with randomization, by randomly 
sampling sub-part representatives in \calg{alg:subPartRand}. In particular, for large parts ($|P_i|\geq D$), 
every node decides to be a representative with probability $\min\{1, \frac{\log n}{D}\}$ and 
then representatives claim balls of radius $D$ around them as their sub-part. \calg{alg:subPartRand}'s properties are given below.

\gdef\subPartRandPC{
\begin{algorithm}
	\caption{Randomized sub-part division.}
	\label{alg:subPartRand}
	\begin{algorithmic}[1]
	\Statex \textbf{Input}: partition of $V$ given by $(P_i)_{i=1}^N$ where $v\in P_i$ knows leader $l_i$ 
	\Statex \textbf{Output}: sub-part division
	\For{part $P_i$}
	\If{$|P_i| \leq D$}
		\State Let $P_i$ have one sub-part with representative $l_i$.
		\State Compute sub-part spanning tree by an $O(D)$ round BFS restricted to $P_i$ starting at $l_i$.
	\Else
	\For{$v \in P_i$}
		\State With prob.~$\min \{1, \frac{\log n}{D}\}$, node $v$ is a representative and sends its ID to $P_i$ neighbors.

		\For{$O(D)$ rounds}
		\State $v$ broadcasts the first representative ID it hears to neighbors in $P_i$ once.
		\State $v$'s sub-part parent is the neighbor from which it first heard a representative ID.
		\State $v$ determines for which of its neighbors it is a parent.
		\EndFor
	\EndFor
	\EndIf
	\EndFor
	\end{algorithmic}
\end{algorithm}
}\fullOnly{\subPartRandPC}

\begin{restatable}{lem}{SPDivRand}
\label{lem:subPartDivRand}
\calg{alg:subPartRand} computes a sub-part division of a part with a known leader w.h.p in $O(D)$ rounds with $O(m)$ messages.
\end{restatable}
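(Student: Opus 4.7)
The plan is to establish three properties for every part $P_i$ --- (i) its sub-parts collectively partition $P_i$ with each sub-part equipped with a spanning tree of depth $O(D)$ rooted at a representative, (ii) the number of sub-parts is $\tilde O(|P_i|/D)$, and (iii) the whole procedure runs in $O(D)$ rounds with $O(m)$ messages --- splitting the analysis into the two cases handled separately by \calg{alg:subPartRand}.

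For $|P_i| \le D$ the construction is deterministic and immediate: a BFS from $l_i$ restricted to $P_i$ produces one sub-part whose spanning tree has depth at most $|P_i|-1 < D$, in $O(D)$ rounds and $O(|P_i|)$ messages. For $|P_i| > D$, the round count is immediate because the loop runs for $O(D)$ rounds with $O(1)$ communication per round. For messages, each node $v$ transmits on each incident $P_i$-edge at most twice (once when sampled as a representative, and once to forward the first rep ID it hears), so summing over all parts yields $O(m)$ messages total. The spanning tree depth is also immediate from the algorithm: a non-representative's sub-part parent is the neighbor from which it first heard a representative ID, so the resulting sub-part forests are BFS trees rooted at the reps, and any node that first hears a rep in round $t$ is at tree-distance exactly $t \le O(D)$ from that rep.

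The substantive step, and the main obstacle, is to show w.h.p.\ that (a) every $v \in P_i$ hears some sampled representative within the $O(D)$-round window, so that sub-parts actually partition $P_i$, and (b) the total number of sampled representatives in $P_i$ is $\tilde O(|P_i|/D)$. Both follow from Chernoff plus union bounds. For (a), fix $v \in P_i$ and define $B_v := \{u \in P_i : d_{G[P_i]}(v,u) \le D\}$. Since $P_i$ is connected and $|P_i| > D$, either the BFS of $G[P_i]$ from $v$ reaches all of $P_i$ within depth $D$ (so $|B_v| = |P_i| > D$) or it contains some node at depth strictly greater than $D$, in which case the witnessing path has at least $D+1$ distinct nodes, giving $|B_v| \ge D+1$. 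Thus the expected number of sampled reps in $B_v$ is at least $(D+1)\cdot \tfrac{\log n}{D} = \Omega(\log n)$ (with the edge case $D < \log n$ trivial since every node is sampled), so a Chernoff bound gives at least one rep in $B_v$ with probability $1 - 1/\poly(n)$, and a union bound over $v$ completes (a). Within the $O(D)$-round BFS, such a rep's ID reaches $v$ in time, so $v$ joins a sub-part. For (b), the expected number of reps in $P_i$ is $|P_i|\log n/D$, and a standard Chernoff concentration shows this is $O(|P_i|\log n/D) = \tilde O(|P_i|/D)$ w.h.p.; a union bound over the at most $n$ parts preserves the high-probability guarantee. Combining (a), (b), and the already established depth, round, and message bounds gives the lemma.
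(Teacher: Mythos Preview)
Your proof is correct and follows essentially the same approach as the paper: both handle the small-part case trivially, and for large parts both argue that the ball of radius $D$ around any node contains at least $D$ vertices, apply a Chernoff bound to guarantee a representative in that ball w.h.p., union-bound over nodes, and then Chernoff-bound the total number of representatives. Your version is more careful in a few places (explicitly justifying $|B_v|\ge D+1$ via connectivity, handling the $D<\log n$ edge case, and spelling out the per-edge message count), but the structure and key ideas are identical.
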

\gdef\SPDivRandProof{
\begin{proof}

Runtime and message complexity are trivial. Correctness is trivial for parts of fewer than $D$ nodes, so consider parts of more than $D$ nodes. By construction, each claimed sub-part has diameter $O(D)$. It remains to show that every node has a representative and there are $\tilde{O}\left(\frac{|P_i|}{D}\right)$ sub-parts in $P_i$. Fix node $v$ and consider the ball of radius $D$ around $v$. Since $P_i$ has at least $D$ nodes, this ball is of size at least $D$ and so a Chernoff bound shows that w.h.p $\Theta(\log n)$ nodes in this ball will elect themselves a representative, meaning $v$ will have a representative. A union bound over all $v$ shows this to hold for every node. Moreover, the expected number of representatives in part $P_i$ is $\frac{|P_i| \log n}{D}$ and so a Chernoff bound shows that w.h.p there are $\tilde{O}\left(\frac{|P_i|}{D}\right)$ sub-parts in $P_i$. A union bound shows this holds for all parts w.h.p.
\end{proof}
}
\fullOnly{\SPDivRandProof}
\shortOnly{
\begin{proof}[Proof (Sketch)]
Chernoff and union bounds prove the sub-part division is of the appropriate size. This fails for small parts $P_i$ (with $|P_i|<D$), but for these parts it is trivial to compute a sub-part division by having the part's leader be the representative. 
\end{proof}
}

\fullOnly{\subsection{Computing Shortcuts with Randomization}}
We now show in \Cref{alg:SCutRand} how we message-efficiently construct a $T$-restricted shortcut with randomization. We rely on the \textsc{CoreFast} shortcut construction algorithm of \citet{haeupler2016low}. In \textsc{CoreFast}, a sub-sampled set of vertices broadcast up $T$, attempting to ``claim'' edges; edges with too many vertices trying to claim them are discarded. To control the message complexity, we only have the $\tilde{O}(n/D)$ sub-part representatives attempt to claim edges. The correctness and runtime of \calg{alg:SCutRand} is given by \Cref{thm:randSCut}.

\gdef\randSCPC{\begin{algorithm}
	\caption{Randomized shortcut construction.}
	\label{alg:SCutRand}
	\begin{algorithmic}[1]
	\Statex \textbf{Input}: partition of $V$, $(P_i)_{i=1}^N$ where $v\in P_i$ knows leader $l_i$; 
	\Statex  \color{white}\textbf{Input}: \color{black}
	BFS tree $T$; 
	\Statex  \color{white}\textbf{Input}: \color{black} 
	sub-part division 
	\Statex \textbf{Output}: $T$-restricted shortcut with congestion $\tilde{O}(c)$ and block parameter $<3b$
	\State Set all $P_i$ \textit{active}.
	\For{$O(\log n)$ iterations}
		\State Run \textsc{CoreFast} \cite{haeupler2016low} shortcut construction algorithm on representatives in active parts.
		\State Run \Cref{alg:verifyBlock} to compute if block parameter of parts exceed $3b$.
		\State Set every $P_i$ with block parameter $< 3b$ on \textsc{CoreFast} result \emph{inactive}.
		\State Let every newly inactive $P_i$ use the shortcut edges assigned to it by \textsc{CoreFast}.
	\EndFor
	\end{algorithmic}
\end{algorithm}}\fullOnly{\randSCPC}

\begin{restatable}{lem}{SCutRand}
\label{thm:randSCut}
Given partition partition of $V$, $(P_i)_{i=1}^N$ where $v\in P_i$ knows leader $l_i$, a sub-part division, spanning tree $T$ and the existence of a $T$-restricted shortcut with congestion $c$ and block parameter $b$, \calg{alg:SCutRand} computes a $T$-restricted shortcut with congestion at most $\tilde{O}(c)$ and block parameter at most $3b$ in $\tilde{O}(bD + c)$ rounds with $O(n)$ messages w.h.p.
\end{restatable}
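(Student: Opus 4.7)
My plan is to establish the three claims by combining the guarantees of the \textsc{CoreFast} shortcut-construction algorithm from \cite{haeupler2016low} with an iterative peeling argument driven by our block-parameter verification from \Cref{lem:verifyb}. The central observation is that restricting the ``sources'' in each \textsc{CoreFast} invocation to the $\tilde{O}(n/D)$ representatives---rather than all nodes of each part---preserves the round- and congestion-efficiency of \textsc{CoreFast} while limiting per-iteration message usage to $\tilde{O}(n)$, since each source sends only along its $O(D)$-length path in the BFS tree $T$.

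First I would recall the relevant \textsc{CoreFast} guarantee: provided the hypothesized $T$-restricted shortcut with congestion $c$ and block parameter $b$ exists, a single \textsc{CoreFast} invocation on a set of representatives returns, in $\tilde{O}(D+c)$ rounds, a $T$-restricted shortcut with edge-congestion $\tilde{O}(c)$; moreover, a Chernoff argument shows that a constant fraction of the active parts attain block parameter at most $3b$ w.h.p.

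Each iteration of \calg{alg:SCutRand} therefore runs \textsc{CoreFast} on representatives of active parts, then calls \Cref{lem:verifyb} to identify the parts that actually attained block parameter $<3b$; those parts permanently adopt their \textsc{CoreFast} edges from the current iteration and become inactive. A standard amplification/peeling argument over $O(\log n)$ iterations retires every part w.h.p. Correctness is then immediate: each part's final shortcut has block parameter $<3b$ by construction, and the total congestion on any tree edge---summed over the $O(\log n)$ iterations in which it might be claimed---is at most $O(\log n)\cdot \tilde{O}(c) = \tilde{O}(c)$.

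Round complexity follows by summing: each iteration costs $\tilde{O}(bD+c)$ (dominated by the verification step, by \Cref{lem:verifyb}), and there are $O(\log n)$ iterations. The main obstacle I anticipate is the message bound: while \textsc{CoreFast} on the $\tilde{O}(n/D)$ representatives uses $\tilde{O}(n)$ messages per iteration, a naive invocation of \Cref{lem:verifyb} could use $\tilde{O}(m)$ messages via its cross-sub-part broadcasts. To attain the claimed $\tilde{O}(n)$ bound I would specialize the verification step so that all shortcut-related communication occurs along $T$ and within the representing trees of sub-parts---exploiting that block identity in a $T$-restricted shortcut is determined entirely by $T$-edges, and that only representatives need to learn the verification outcome---thereby replacing arbitrary $E$-broadcasts across sub-parts by $T$-only propagation combined with intra-sub-part aggregation.
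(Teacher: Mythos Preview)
Your high-level plan---iterate \textsc{CoreFast} on the representatives, verify block parameter via \Cref{lem:verifyb}, freeze the successful parts, and repeat $O(\log n)$ times---is exactly what the paper does, and your correctness and round-complexity arguments match the paper's essentially line for line (the paper cites \cite[Lemma~4]{haeupler2016low} for the guarantee that each \textsc{CoreFast} call achieves block parameter $\le 3b$ for at least half the remaining parts with congestion $\le 8c$).

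Where you diverge is on the message bound, and here you are in fact more scrupulous than the paper. The paper's own proof simply invokes \Cref{lem:verifyb} for the verification step and records its $\tilde{O}(m)$ message cost, without reconciling this with the $O(n)$ claim in the lemma statement. Your instinct to flag this is correct. However, your proposed fix---routing all verification traffic along $T$ and within sub-part trees---does not obviously close the gap: block membership is indeed determined by $T$-edges, but aggregating the block \emph{count} across a single part requires moving information between representatives lying in \emph{different} blocks of that part, and the only edges guaranteed to connect those blocks are part edges in $E\setminus E[T]$. Restricting to ``$T$-only propagation combined with intra-sub-part aggregation'' therefore cannot, by itself, carry the block-count from one block of $P_i$ to another. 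If you want to pursue the stronger $\tilde{O}(n)$ bound you would need an additional idea (e.g., bounding the number of inter-sub-part edges actually traversed, or routing the per-part aggregate through the tree root), and this goes beyond what the paper itself establishes.
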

\gdef\SCutRandProof{
\begin{proof}
We first argue runtime and message complexity. \citet*[Lemma 4]{haeupler2016low} show that \textsc{CoreFast} takes $O(D\log n + c)$ rounds. However, in this algorithm every node potentially sends a message up $T$ once leading to super-linear message complexity. By amending \textsc{CoreFast} so only the $\tilde{O}\left(\frac{n}{D}\right)$ sub-part representatives send a message up $T$ once as we do, it is easy to see that  the algorithm uses only $\tilde{O}(n)$ messages total. Lastly, \Cref{lem:verifyb} shows that block parameter with randomization uses only $\tilde{O}(bD+c))$ and $\tilde{O}(m)$ messages.

We now argue correctness. \citet*[Lemma 4]{haeupler2016low} show that each time \textsc{CoreFast} is run, it computes a $T$-restricted shortcut with block parameter at most $3b$ for at least half of the nodes and congestion at most $8c$. It is easy to see that only having sub-part representatives participate in \textsc{CoreFast} does not affect correctness and so we conclude that after $O(\log n)$ iterations every $P_i$ has been rendered inactive. By construction every $P_i$ has block parameter $< 3b$ and since the congestion of any edge increases by at most $8c$ in any iteration of \calg{alg:SCutRand}, the total congestion of our returned shortcut is $\tilde{O}(c)$.
 \end{proof}
 }
 \fullOnly{\SCutRandProof}
 \shortOnly{
 \begin{proof}[Proof (Sketch)]
Since only the $\tilde{O}(n/D)$ sub-part representatives broadcast up a tree of depth $O(D)$, each uses $O(D)$ messages and so we use $O(n)$ messages. Round complexity is trivial. Congestion of the shortcut follows by construction. The block parameter can be shown by arguing that we can only discard so many extra edges relative to the optimal shortcut and each extra discarded edge induces a new block. 
\end{proof}
 }

\section{Deterministic Subroutines}
\label{sec:det}
In this section we show how to construct sub-part divisions and shortcuts deterministically.\shortOnly{ Again, we defer pseudocode and full proofs to  \Cref{sec:appDet}.}
\fullOnly{\subsection{Computing Star Joinings Deterministically}}
\shortOnly{\subsection{Computing Sub-Part Divisions}}
Our algorithm for constructing sub-part divisions repeatedly merges together sub-parts until they are of sufficient size. However, if sub-parts are allowed to merge arbitrarily, the resulting sub-parts may have prohibitively large diameter. The diameter of resulting sub-parts can be limited by forcing sub-parts to always join in a star-like fashion. As such, we begin by providing a deterministic algorithm to enable such behavior. We term this behavior a \emph{star joining}.

\begin{Def}[Star joining]
Let $(P_i)_{i=1}^N$ partition $V$. We say a \emph{star joining} is computed over parts $(P_i)_{i=1}^N$ if the following holds:
a constant fraction of the parts $P_i$ are designated as \emph{receivers}, and the other parts $P_i$ are designated as \emph{joiners}. For every joiner part $P_i$, all $v \in P_i$ knows some (common) edge with one endpoint in $P_i$ and another end-point in some receiver part $P_j$.
\end{Def}

We now show how a star joining can be computed deterministically, given a deterministic PA solution. We use as a sub-routine the 3-coloring algorithm of \citet{cole1986deterministic}. Roughly, the \citet{cole1986deterministic} algorithm works as follows. Every node begins with its ID as its color, meaning there are initially $n$ colors. Next, every node updates its color based on its neighbors' colors, logarithmically reducing the number of possible colors. This is then repeated $\log^* n$ times. For more, see \citet{cole1986deterministic}. The properties of this algorithm are as follows.

\begin{lem}\label{lem:cole-vishkin}(\cite[Corollary 2.1]{cole1986deterministic})
	An oriented $n$-vertex graph with maximum out-degree of one can be $3$-colored in $O(\log^* n)$ rounds with $O(m \log^*n)$ messages.
\end{lem}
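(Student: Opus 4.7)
The plan is to appeal to the classical Cole--Vishkin bit-compression technique. Each node $v$ begins by adopting its unique $O(\log n)$-bit ID as its color, which is trivially a proper coloring of the oriented graph because IDs are distinct. Let $k$ denote the current bit length of the color representation; initially $k = O(\log n)$.

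In each round of the main phase, every node $v$ forwards its current color along its (at most one) out-edge to $p(v)$, and upon receipt of $c_{p(v)}$ updates its color to the pair $(i, c_v[i])$, where $i$ is the smallest index at which $c_v$ disagrees with $c_{p(v)}$. This representation uses only $\lceil \log_2 k \rceil + 1$ bits. A one-line case analysis confirms that properness along oriented edges is preserved: if $v$ and $p(v)$ select the same index $i$ then their $i$-th bits disagree by construction, and otherwise the leading components $i$ themselves differ. Because the recurrence $k \mapsto \lceil \log_2 k \rceil + 1$ drives $k$ to $O(1)$ in $O(\log^* n)$ iterations, after the main phase the palette has size at most $6$. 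A constant number of additional rounds then collapses $6$ colors to $3$ via the standard Cole--Vishkin shift-down: processing colors $5$, $4$, $3$ in order, every node currently holding the target color rehosts to the smallest color in $\{0,1,2\}$ distinct from its out-neighbor's color, which is well-defined since each node has at most one out-neighbor.

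For complexity, each round involves each node sending a single short color message along its lone out-edge, so at most $n \le m$ messages per round; summed across $O(\log^* n)$ rounds this gives $O(m \log^* n)$ messages in $O(\log^* n)$ rounds, as claimed. The only genuinely delicate step I foresee is the case analysis establishing that the bit-compression update preserves properness of the oriented coloring, but this is precisely the classical Cole--Vishkin invariant and goes through verbatim.
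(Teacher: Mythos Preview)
The paper does not prove this lemma at all; it is stated as a direct citation of Cole--Vishkin, so there is no in-paper argument to compare against. Your sketch is exactly the classical Cole--Vishkin bit-compression proof, which is the right thing to supply here.

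Two small slips are worth flagging. First, there is a direction mismatch: you have $v$ send its color along its out-edge to $p(v)$, yet $v$ must \emph{receive} $c_{p(v)}$ to compute $(i,c_v[i])$. The fix is to have each node broadcast its color along all incident edges (or equivalently along its in-edges), which still costs $O(m)$ messages per round and hence $O(m\log^* n)$ overall; your per-round bound of ``at most $n$ messages'' is actually $m$ messages. Second, your $6\to 3$ reduction only forbids the out-neighbor's color. With unbounded in-degree this is not enough: a recolored node could collide with an in-neighbor. The standard remedy is the genuine shift-down (every non-root adopts its parent's color) so that afterwards all of a node's children share one color, leaving at most two forbidden values when eliminating colors $5,4,3$. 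In the paper's actual use of this lemma the super-graph has already been reduced to directed paths and cycles, where in-degree is at most one and your simpler reduction happens to suffice, but for the lemma as stated you need the full shift-down.
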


\gdef\starJoinPC{
\begin{algorithm}
	\caption{Deterministic star joining.}
	\label{alg:starJoinDet}
	\begin{algorithmic}[1]
	\Statex \textbf{Input}: $(P_i)_{i=1}^N$ s.t. $v \in P_i$ knows edge $e_i$ exiting $P_i$ and leader $l_i$; 
	\Statex  \color{white}\textbf{Input}: \color{black}
	PA algorithm $\mathcal{A}$  
	\Statex \textbf{Output}: a star joining on $(P_i)_{i=1}^N$
	\State $\mathcal{J}, \mathcal{R} \leftarrow \emptyset$ \Comment{Initialize joiners and receiver}
	\State $G' \leftarrow \left(((P_i)_{i=1}^N), \{e_i\}_{i=1}^N \right)$
	\State $\mathcal{R} \leftarrow \mathcal{R} \cup \{P_i : \delta^-_{G'}(P_i) \geq 2\}$ by running $\mathcal{A}$.
	\State $\mathcal{J} \leftarrow \mathcal{J} \cup \{P_i : P_i \not \in \mathcal{R} \wedge e_i = (P_{i'}, P_i) \text{ s.t. } P_{i'} \in \mathcal{R} \}$  by running $\mathcal{A}$.\label{line:starJoinFirstStage}
	\State $G' \leftarrow G' \setminus (\mathcal{R} \cup \mathcal{J})$.
	\State Run the 3-coloring algorithm of \citet{cole1986deterministic} on $G'$ \label{line:simColeVish}.
	\For{color $k=1,2,3$}
		\State{$\mathcal{R} \leftarrow \mathcal{R} \cup \{P_i : P_i \text{ colored } k\}$} by running $\mathcal{A}$.\label{line:starJoinSecStage}
		\State $\mathcal{J} \leftarrow \mathcal{J} \cup \{P_i : P_i \not \in \mathcal{R} \wedge e_i = (P_{i'}, P_i) \text{ s.t. } P_{i'} \in \mathcal{R} \}$ by running $\mathcal{A}$.
	\EndFor
	\end{algorithmic}
\end{algorithm}}\fullOnly{\starJoinPC}

We give our algorithm for deterministically computing star joinings in \calg{alg:starJoinDet} which works as follows. Take the super-graph whose nodes are parts and whose edges are the chosen (directed) edges. First, designate parts with at least two incoming edges receivers and all parts with an outgoing edge into one such part a joiner. These parts constitute all trees in our super-graph and so we next remove them from the super-graph, leaving only (directed) paths and (directed) cycles. On the remaining paths and cycles, simulate the Cole-Vishkin algorithm to compute a 3-coloring of the remaining nodes in the super-graph. For colors $k=1,2,3$ make all $k$-colored parts receivers, their neighbors joiners and remove these parts from this process. The properties of our deterministic star joining algorithm are given by the following lemma.
\begin{restatable}{lem}{sJoin}
\label{lem:starJoinDet}
Let $(P_i)_{i=1}^N$ partition $V$ and suppose every $v \in P_i$ knows some edge $e_i \in E$ exiting $P_i$. If algorithm $\mathcal{A}$ solves PA over $(P_i)_{i=1}^N$, then \calg{alg:starJoinDet} computes a star joining over $(P_i)_{i=1}^N$ with $O(\log^*n)$ calls to $\mathcal{A}$.
\end{restatable}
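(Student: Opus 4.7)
My plan is to verify three things about Algorithm~\ref{alg:starJoinDet}: that the output satisfies the star-joining definition, that the algorithm uses $O(\log^{*}n)$ invocations of the PA algorithm $\mathcal{A}$, and that a constant fraction of the parts end up as receivers.

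The key structural observation is that the super-graph $G'$, whose vertices are the parts $P_i$ and whose directed edges are the chosen $e_i$, is a functional graph (each $P_i$ has exactly one outgoing super-edge), so its connected components are directed cycles with in-trees hanging off. The first stage designates every super-vertex of in-degree at least $2$ as a receiver and every part whose $e_i$ lands in a receiver as a joiner; the star-joining condition is automatic for this group, and the stage can be implemented in $O(1)$ calls to $\mathcal{A}$ by having each part aggregate (via $\mathcal{A}$) the count of incoming $e_j$-endpoints landing in it, broadcasting the resulting high-in-degree bit back, shipping it one hop across each $e_i$, and finally aggregating once more at the source part. After this stage, the remaining super-graph has max in-degree and out-degree at most $1$, and is therefore a disjoint union of directed paths and directed cycles.

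Next I would simulate the Cole--Vishkin $3$-coloring algorithm of Lemma~\ref{lem:cole-vishkin} on the remaining super-graph. A single Cole--Vishkin super-round only requires that each remaining $P_i$ learn its successor's current color, which can be carried out with $O(1)$ calls to $\mathcal{A}$: broadcast each leader's current color to its members via $\mathcal{A}$, ship one color across each $e_i$ in a single \congest round, and aggregate back via $\mathcal{A}$ at the receiving part's leader. Over the $O(\log^{*}n)$ super-rounds of Cole--Vishkin, this gives an $O(\log^{*}n)$ bound on PA calls. The subsequent three-iteration color-processing loop adds $O(1)$ more PA calls to designate color-$k$ nodes as receivers and their in-neighbors (those not already classified) as joiners.

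For the constant-fraction claim, I would argue that at least one of the three color classes contributes $\ge X/3$ receivers from the Cole--Vishkin stage, where $X$ is the size of the remaining super-graph after the first stage, and that $|\mathcal{J}_1|\ge 2|\mathcal{R}_1|$ since each first-stage receiver has in-degree at least $2$; a short case split on whether $X \ge N/2$ then yields $|\mathcal{R}|=\Omega(N)$ in either case. The step I expect to be the main obstacle is the careful bookkeeping in the color-processing loop: ensuring that each remaining super-vertex is assigned exactly one of receiver/joiner, that joiners designated there genuinely have their $e_i$ landing in a receiver from the same or an earlier color iteration, and that the interaction with the partial labels from the first stage preserves both invariants. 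The ``skip if already in $\mathcal{R}\cup\mathcal{J}$'' convention in each iteration cleanly resolves this.
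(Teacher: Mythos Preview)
Your structural analysis and the $O(\log^*n)$ complexity argument are correct and essentially match the paper's approach: the super-graph is functional, the first stage handles the branching part, the remainder is paths and cycles, and Cole--Vishkin can be simulated with $O(1)$ PA calls per super-round.

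The gap is in the constant-fraction claim. You argue toward $|\mathcal{R}|=\Omega(N)$, but this is the wrong inequality. What the paper proves---and what the application in \Cref{alg:subPartDet} needs for $O(\log n)$ iterations to suffice---is that $|\mathcal{R}|\le cN$ for some constant $c<1$; equivalently, a constant fraction of parts are \emph{joiners}, so that merging shrinks the number of parts geometrically. The paper obtains this by arguing that in the first stage the receivers are at most half the classified nodes (they are the high-branching internal nodes), and in the second stage the $3$-coloring on paths/cycles forces at least roughly a third of those nodes to become joiners (a node becomes a joiner exactly when its out-neighbor has smaller color, and with three colors one cannot have three consecutive ascents). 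Your argument does not establish this direction. Moreover, two of your intermediate steps break: the claim $|\mathcal{J}_1|\ge 2|\mathcal{R}_1|$ is false because first-stage receivers can point to other receivers (so not every in-edge of a receiver contributes a joiner); and your case split in the regime $X<N/2$ does not give a lower bound on $|\mathcal{R}|$ at all---for instance, if one part has in-degree $N-1$, then $|\mathcal{R}_1|=1$, $X=0$, and $|\mathcal{R}|=1$, which is perfectly fine for the intended (upper-bound) conclusion but refutes the $\Omega(N)$ lower bound you are aiming for.
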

\shortOnly{
\begin{proof}[Proof (Sketch)]
Standard graph-theoretic arguments show that a constant fraction of nodes are receivers. Moreover, it is not too hard to see that the Cole-Vishkin algorithm can be efficiently simulated within our bounds. 
\end{proof}
}

\gdef\sJoinProof{
\begin{proof}
We begin by proving correctness. 
\Cref{line:starJoinFirstStage} yields stars of joiners centered around receivers. Moreover, the union of all nodes designated in \Cref{line:starJoinFirstStage} from a forest with trees of internal degree at least 2. Therefore, the number of internal (marked) super-nodes (and therefore the number of stars) in \Cref{line:starJoinFirstStage} is at most one half of the super-nodes of the tree. 

Now consider the result of \Cref{line:starJoinSecStage}. As no super-node in $G'$ has in-degree at least two at this point in the algorithm, the super-graph considered in \Cref{line:starJoinSecStage} consists of directed cycles and paths. Thus, each time we remove a $P_i$ from the super-graph we remove at most three super-nodes from the graph and $P_i$ gets to merge with its neighbor. It follows that at least $\frac{1}{3}$ of these super-nodes are merged. Combining the first and second stage, we find that the super-nodes are combined into stars, where the number of obtained nodes is less than 2/3 of the original nodes. Therefore the above algorithm computes a star joining.

We now argue that our algorithm requires $O(\log ^* n)$ runs of $\mathcal{A}$. This clearly holds for all sub-routines of our algorithm except for \Cref{line:simColeVish}. In particular, we must argue how the Cole-Vishkin algorithm can be efficiently simulated on our super-graph using $O(\log ^* n)$ runs of $\mathcal{A}$. We repeat the following $O(\log ^* n)$ times. Let $l_i$ be the known leader of $P_i$. Each $P_i$ begins with the color of $l_i$'s ID. Next, the node in $P_i$ incident to the edge chosen by $P_i$ routes the color it received to $l_i$ using $\mathcal{A}$. Then, $l_i$ performs the Cole-Vishkin computation and then broadcasts $V_i$'s new color to all nodes in $P_i$ using $\mathcal{A}$.
\end{proof}
}
\fullOnly{\sJoinProof}

\fullOnly{\subsection{Computing Sub-Part Divisions Deterministically}}
We now use star joinings to deterministically compute sub-part divisions in \calg{alg:subPartDet} as follows: start with each node in its own sub-part; compute star joinings and merge stars of joiners centered around receivers $O(\log n)$ times, fixing sub-parts once they have at least $D$ nodes. Correctness and runtime of \calg{alg:subPartDet} are given by the following lemma.

\gdef\detSPPC{
\begin{algorithm}
	\caption{Deterministic sub-part division.}
	\label{alg:subPartDet}
	\begin{algorithmic}[1]
	\Statex \textbf{Input}: partition of $V$, $(P_i)_{i=1}^N$
	\Statex \textbf{Output}: a sub-part division
	\For{part $P_i$}
		\State $\mathcal{I}_i \leftarrow \{\{v\} : v \in P_i\}$\Comment{Initialize incomplete sub-parts}
		\State $\mathcal{C}_i \leftarrow \{\}$\Comment{Initialize complete sub-parts}
	\For {$O(\log n)$ iterations }
		\For{$F_j \in \mathcal{I}_i$}
			\If{$\exists$ edge $(u, v)\in F_j \times (\bigcup_{F_{j'}\in \mathcal{I}_i} F_{j'}\setminus F_j)$}
				\State $e_j \leftarrow e$ for such edge $e=(u,v)$ using PA.
			\Else
				\State $e_j \leftarrow e$ for some edge $e=(u,v)\in F_j \times (\bigcup_{F_{j'}\in \mathcal{C}_i} F_{j'})$ using PA.
			\EndIf
		\EndFor

	\State Run \calg{alg:starJoinDet} on $\mathcal{I}_i$ sub-parts with edges $\{e_j\}$ \label{line:detStarJoin} to compute a star-joining.
	\For{Joiner $F_j$ with edge $e_j = (u, v)$ with $v \in F_{j'}$}
		\State $F_j$ merges with $F_{j'}$ using PA. \Comment{if $F_{j'}\in \mathcal{C}_i$, update $\mathcal{C}_i$ accordingly.}
		\State $u$ remembers $v$ as its parent.
		\State $F_j$ orients its tree edges to $v$ using PA.
	\EndFor
	\State $\mathcal{C}_i' \leftarrow \{F_j \in \mathcal{I}_i : |F_j| \geq D\}$ using PA.
	\State $\mathcal{C}_i \leftarrow \mathcal{C}_i \cup \mathcal{C}_i'$.
	\EndFor
	\EndFor
	\State \Return Division given by $\{\mathcal{C}_i\}_{i=1}^N$.

	\end{algorithmic}
\end{algorithm}
}\fullOnly{\detSPPC}

\begin{restatable}{lem}{SPDivDet}
\label{lem:comDiv}
Given partition $(P_i)_{i=1}^N$ of $V$, \calg{alg:subPartDet} computes a sub-part division of $(P_i)_{i=1}^N$ in $\tilde{O}(D)$ rounds with $\tilde{O}(n)$ messages.
\end{restatable}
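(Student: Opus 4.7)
The plan is to verify that \calg{alg:subPartDet} produces a valid sub-part division — i.e., $\tilde{O}(|P_i|/D)$ sub-parts per $P_i$, each equipped with a spanning tree of diameter $O(D)$ and a designated representative — and then bound its round and message complexities. I would handle these claims in three steps: sub-part count, sub-part diameter, and complexity.

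First, I would argue that the number of sub-parts per $P_i$ is $\tilde{O}(|P_i|/D)$. Invoking \Cref{lem:starJoinDet}, each call to the deterministic star joining algorithm designates a constant fraction of the sub-parts in $\mathcal{I}_i$ as joiners that subsequently merge with another sub-part (possibly in $\mathcal{C}_i$). Hence $|\mathcal{I}_i|$ shrinks by a constant factor each outer iteration, so after $O(\log n)$ iterations essentially all sub-parts of $P_i$ are complete. Since each complete sub-part contains at least $D$ nodes, there can be at most $|P_i|/D$ of them, and the few remaining incomplete sub-parts are absorbed into complete sub-parts via the ``else'' edge selection rule.

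Next, I would bound the diameter of each sub-part's spanning tree by $\tilde{O}(D)$ via induction on the iteration count. The key invariant is that any \emph{incomplete} sub-part has size less than $D$ and hence trivially has tree diameter less than $D$. When a star join merges joiners into a receiver, the merged tree has diameter at most $D_R + 2(d_J + 1)$, where $D_R$ is the receiver's current diameter and $d_J$ is the maximum joiner diameter. Since every joiner is, by definition, incomplete at the moment of merger, $d_J < D$, so each iteration adds at most $O(D)$ to any receiver's diameter. Over $O(\log n)$ iterations the diameter therefore stays $O(D \log n) = \tilde{O}(D)$. Representatives are maintained throughout — each merger reorients the joiner's tree toward the receiver's endpoint, preserving a single root per sub-part.

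For complexity, I would note that each of the $O(\log n)$ outer iterations performs $O(\log^* n)$ PA calls (inside \Cref{alg:starJoinDet}, by \Cref{lem:starJoinDet}) plus a constant number of additional PA calls for edge selection, merging, orientation, and the completion check. Crucially, these PA calls are \emph{not} instances of general Part-Wise Aggregation of \Cref{def:pwAgg}, but aggregation over sub-parts for which the algorithm already maintains spanning trees of diameter $\tilde{O}(D)$. Such tree-based aggregation takes $\tilde{O}(D)$ rounds and $O(|F|)$ messages per sub-part $F$; since $\sum_F |F| = n$, each call uses $O(n)$ messages. Summing over the $O(\log n \cdot \log^* n) = \tilde{O}(1)$ PA calls yields the claimed $\tilde{O}(D)$ rounds and $\tilde{O}(n)$ messages.

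The step I expect to be trickiest is the diameter argument: because complete sub-parts are never discarded, they can keep absorbing joiners across iterations, so one must be careful to show the diameter grows only additively (by $O(D)$ per iteration) rather than multiplicatively. The linchpin is exactly that a joiner is incomplete at the moment of its merger and therefore has diameter less than $D$, which keeps the per-iteration additive increase bounded by $O(D)$ and yields the final $\tilde{O}(D)$ bound after $O(\log n)$ iterations.
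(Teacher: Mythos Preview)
Your proof follows the same overall structure as the paper's --- verify sub-part count, diameter, and complexity --- and your arguments for count and complexity are essentially the paper's (the paper likewise notes that the only nontrivial point is that the PA calls are over incomplete sub-parts, which have fewer than $D$ nodes and hence a spanning tree on which aggregation is trivial).

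The one substantive difference is the diameter argument. You bound the diameter by $O(D\log n)=\tilde{O}(D)$, observing that each of the $O(\log n)$ iterations adds at most $O(D)$ to any sub-part's tree diameter. The paper obtains the sharper $O(D)$ bound (in fact at most $4D$) via a ``core'' argument: when a sub-part first becomes complete, call its nodes at that moment the \emph{core}; the paper argues that any incomplete sub-part later merging into this complete one must attach via a node in the core. The reason is that an incomplete $F_j$ only selects an edge to a complete sub-part when it has no incomplete neighbor; if that edge landed on a non-core node $v$, then $v$ itself arrived via an earlier merge of some incomplete $F_k$ that, at that earlier time, also had no incomplete neighbors --- but $F_j$'s endpoint $u$ was then a neighbor of $F_k$, forcing $u$'s sub-part to have already been complete, contradicting $F_j$ being incomplete now. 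Hence every non-core node is within $D$ of the core, and the core itself has diameter $\leq 2D$, giving $\leq 4D$ overall.

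Your argument is correct and simpler, but strictly speaking yields only $\tilde{O}(D)$ diameter, which does not quite meet the $O(D)$ requirement in the definition of a sub-part division. This does not matter for any downstream use in the paper (everything absorbs polylog factors), but if you want to match the definition as stated you need the paper's sharper observation.
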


\gdef\SPDivDetProof{
\begin{proof} 
 

Round and message complexities are trivial apart from the fact that we must show that PA can be solved within our bounds on incomplete sub-parts. However, notice that an incomplete sub-part has fewer than $D$ nodes by definition along with a spanning tree in which every node knows its parent; as such aggregating within each incomplete sub-part is trivially achievable with $O(D)$ rounds and $O(m)$ messages.

We now argue correctness. Correctness for parts of fewer than $D$ nodes is trivial. Consider parts of more than $D$ nodes. Sub-parts continue to merge until they are complete and have at least $D$ nodes and so our division clearly has $\tilde{O}\left(\frac{P_i}{D}\right)$ sub-parts. It remains to show that every complete sub-part's spanning tree has diameter $\tilde{O}(D)$. When a complete sub-part results from two incomplete sub-parts joining, its spanning tree has diameter at most $2D$. Call these nodes the \emph{core} of the complete sub-part. When an incomplete sub-part $F_j$ -- which has spanning tree with diameter at most $D$ since it has fewer than $D$ nodes by definition -- joins a complete sub-part, it necessarily joins by way of nodes in the core. Thus, any node in $F_j$ is within $3D$ of any node in the core by way of the resulting sub-part's spanning tree. Similarly, any other subsequent incomplete sub-part that joins the complete sub-part will be within $4D$ of any nodes in $F_j$ by way of the associated spanning tree. Thus, every complete sub-part has spanning tree with diameter at most $4D$.
\end{proof}
}
\fullOnly{\SPDivDetProof}
\shortOnly{
}

\subsection{Computing Shortcuts  Deterministically}

Having shown how sub-part divisions can be computed in a deterministic fashion, we now turn to our deterministic shortcut construction. We rely on heavy path decompositions~\cite{sleator1983data}. 

\begin{Def}[Heavy Path Decomposition \cite{sleator1983data}]
	Given a directed tree $T$, an edge $(u,v)$ of $T$ is  \emph{heavy} if the number of $v$'s descendants is more than half the number of $u$'s descendants; otherwise, the edge is \emph{light}. A \emph{heavy path decomposition} of $T$ consists of all the heavy edges in $T$.
\end{Def}

It is immediate from the definition that each leaf-to-root path on an  $n$-node tree $T$ intersects at most $\lfloor \log_2 n\rfloor$ different paths of $T$'s heavy path decomposition.
Given a rooted tree $T$ of depth $D$, a heavy path decomposition of $T$ can be easily computed in $O(D)$ rounds using $O(n)$ messages.


Our deterministic shortcut construction algorithm, \calg{alg:SCutDet}, first computes a heavy path decomposition and then computes shortcuts on the obtained paths in a bottom-up order. Thus, we first provide a sub-routine, \calg{alg:pathSCut}, that computes shortcuts of congestion $O(c \log D)$ on a path $P$. 
\gdef\SCPathText{\calg{alg:pathSCut} assumes every node $v$ begins with a set $S(v)$ of part IDs that would like to use $v$'s parent edge in the path. For simplicity, we assume vertices of $\mathcal{P}$ are numbered by their height, $v=1,2,\dots$ (i.e., the source of the path is number $1$, its parent is numbered $2$, etc').}\fullOnly{\SCPathText} \calg{alg:pathSCut} iteratively extends paths used for shortcuts, repeatedly doubling them in length, unless too much congestion results. See \Cref{fig:path-shortcut}. This algorithm's properties are as follows.

\gdef\detSCPC{
\begin{algorithm}
	\caption{Deterministic shortcut construction for paths.}
	\label{alg:pathSCut}
	\begin{algorithmic}[1]
		\Statex \textbf{Input}: Path $P \subseteq V$; 
		\Statex  \color{white}\textbf{Input}: \color{black}
		Mapping $S : V \rightarrow 2^{(P_j)_{j=1}^N}$;
		\Statex  \color{white}\textbf{Input}: \color{black}
		Desired congestion $c$
		\Statex \textbf{Output}: Mapping $S_f : V \rightarrow 2^{(P_j)_{j=1}^N}$
		\For{$v \in V$} 
			\State Set $S_0(v) \leftarrow S(v)$.
		\EndFor
		\For{$i = 0,1,2,\dots, \log_2D-1$}
			\For{every node $v \equiv 2^i \mod 2^{i+1}$}
			\If{$|S_i(v)| \geq 2c$} 
				\State Break $(v, v + 1)$ and set $S_i(v) \leftarrow \emptyset$.

			\Else{ 			\State $u \gets v + 2^i$
			\If{no broken edges between $v$ and $u$}
			\State Transmit $S_i(v)$ from $v$ to $u$ along $P$.
			\State Set $S_{i+1}(u)\leftarrow S_i(u) \cup S_i(v)$.
			\EndIf}
			\EndIf
			\EndFor
		\EndFor\\
		\Return $S_f = S_{\log_2 D}$.
	\end{algorithmic}
	\end{algorithm}
}\fullOnly{\detSCPC}

\begin{restatable}{lem}{SCutPath}
\label{lem:path-shortcut}
	Given directed path $P$ of length $D$, desired congestion $c$ and $S : V \rightarrow 2^{(P_i)_{i=1}^N}$ which denotes for each vertex $v$ which parts want to use $v$'s parent edge in $P$, \calg{alg:pathSCut} returns $S_f : V \rightarrow 2^{(P_i)_{i=1}^N}$ s.t.\ for every $v \in P$ it holds that $|S_f(v)| = O(c \log D)$ in $O(c\log D + D)$ rounds.
\end{restatable}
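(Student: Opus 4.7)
The plan is to handle the two claims separately: the congestion bound $|S_f(v)| = O(c\log D)$ and the $O(c\log D + D)$ round complexity.

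For the congestion bound, the starting point is the 2-adic structure of the algorithm. A vertex at position $j$ along $P$ acts as a sender in exactly one iteration, namely iteration $k$ where $2^k$ is the largest power of two dividing $j$, and acts as a receiver in precisely the iterations $0, 1, \ldots, k-1$; afterwards it is inactive. The key invariant I would establish by induction on $i$ is that for any vertex $v$ that has not yet acted as a sender, $|S_{i+1}(v)| \leq |S_0(v)| + 2c(i+1)$. This follows because every incoming bundle has size strictly less than $2c$: the threshold check forces any would-be sender with $|S_i(v)| \geq 2c$ to break its outgoing edge and zero out its own set before any transmission occurs. Three cases then dispose of $|S_f(v)|$: if $v$ broke at iteration $k$, then $S_f(v) = \emptyset$ by construction; if $v$ either sent successfully at iteration $k$ or was forced to abort due to a previously broken edge on its target path, then $|S_f(v)| = |S_k(v)| < 2c$; and if $v$ is the top of the path and never sends, then $|S_f(v)| \leq |S_0(v)| + 2c\log_2 D$. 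Under the natural input assumption $|S_0(v)| = O(c)$ (consistent with the ambient deterministic shortcut construction, where $c$ is the congestion of an extant shortcut), all three cases yield $|S_f(v)| = O(c\log D)$.

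For the round complexity, in iteration $i$ each active sender $v$ transmits at most $2c - 1$ tokens along the length-$2^i$ subpath from $v$ to $u = v + 2^i$. Because consecutive senders in iteration $i$ are $2^{i+1}$ apart, their subpaths are pairwise disjoint, so all iteration-$i$ transmissions run in parallel. Standard pipelining on a path moves $O(c)$ tokens over $2^i$ edges in $O(2^i + c)$ rounds, and summing across the $\log_2 D$ iterations yields $\sum_{i=0}^{\log_2 D - 1} O(2^i + c) = O(D + c\log D)$ rounds.

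The main obstacle is careful bookkeeping around the ``no broken edges between $v$ and $u$'' check, since a sender can abort mid-algorithm without clearing its set, making $|S_f(v)|$ a case-analyzed quantity rather than a single uniform update; moreover, a break at iteration $i$ can cascade to later iterations by blocking longer jumps over the affected edge. What makes the argument go through is that the invariant $|S_i(v)| < 2c$ at the moment of any (attempted) transmission constrains the final size regardless of whether that transmission succeeds, fails, or is skipped, so skipped sends only make $|S_f|$ smaller, never larger. Once this case analysis is carried out cleanly, both claims follow.
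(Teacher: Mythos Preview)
Your proposal is correct and follows essentially the same approach as the paper: both arguments rest on the invariant that any transmitted bundle has size strictly below $2c$ (enforced by the threshold check), so a receiver that accumulates over $i$ iterations holds at most $|S_0(v)| + 2ci$ parts, and both bound iteration $i$ by $O(2^i + c)$ rounds and sum. Your treatment is in fact more careful than the paper's terse version---you explicitly handle the initial condition $|S_0(v)| = O(c)$ and the case analysis around aborted or blocked sends, which the paper glosses over.
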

\gdef\SCutPathProof{
\begin{proof}
	To bound the running time, observe that iteration $i$ of the algorithm can be implemented in $c+2^i$ rounds.  Summing over all iterations 
	$i=0,1,\dots,\log_2 D-1$, the bound on the number of rounds follows. 
	To bound the congestion of the output shortcuts, we prove by induction that before the $i$-th iteration the congestion on any edge is at most $2ci$. This clearly holds for $i=0$. Assume as an inductive hypothesis that before iteration $i$ all edges are used by at most $2ci$ parts. In iteration $i$ the only edges whose congestion are potentially increased are those edges exiting $u$ (i.e.\ edge $(u, u+1)$) such that $u \equiv 0 \mod 2^{i+1}$. The congestion on this edge increases by $|S_i(v)|$ where $v \equiv 2^i \mod{2^{i+1}}$. Applying our inductive hypothesis we get that the total congestion on such an edge is at most $|S_i(v)| + |S_i(u)| = 2ci$, implying the claimed bound on the congestion. 
\end{proof}
}
\fullOnly{\SCutPathProof}
\shortOnly{
}

\gdef\scutPathFig{
\begin{figure}
 \centering
	\includegraphics[scale=0.13]{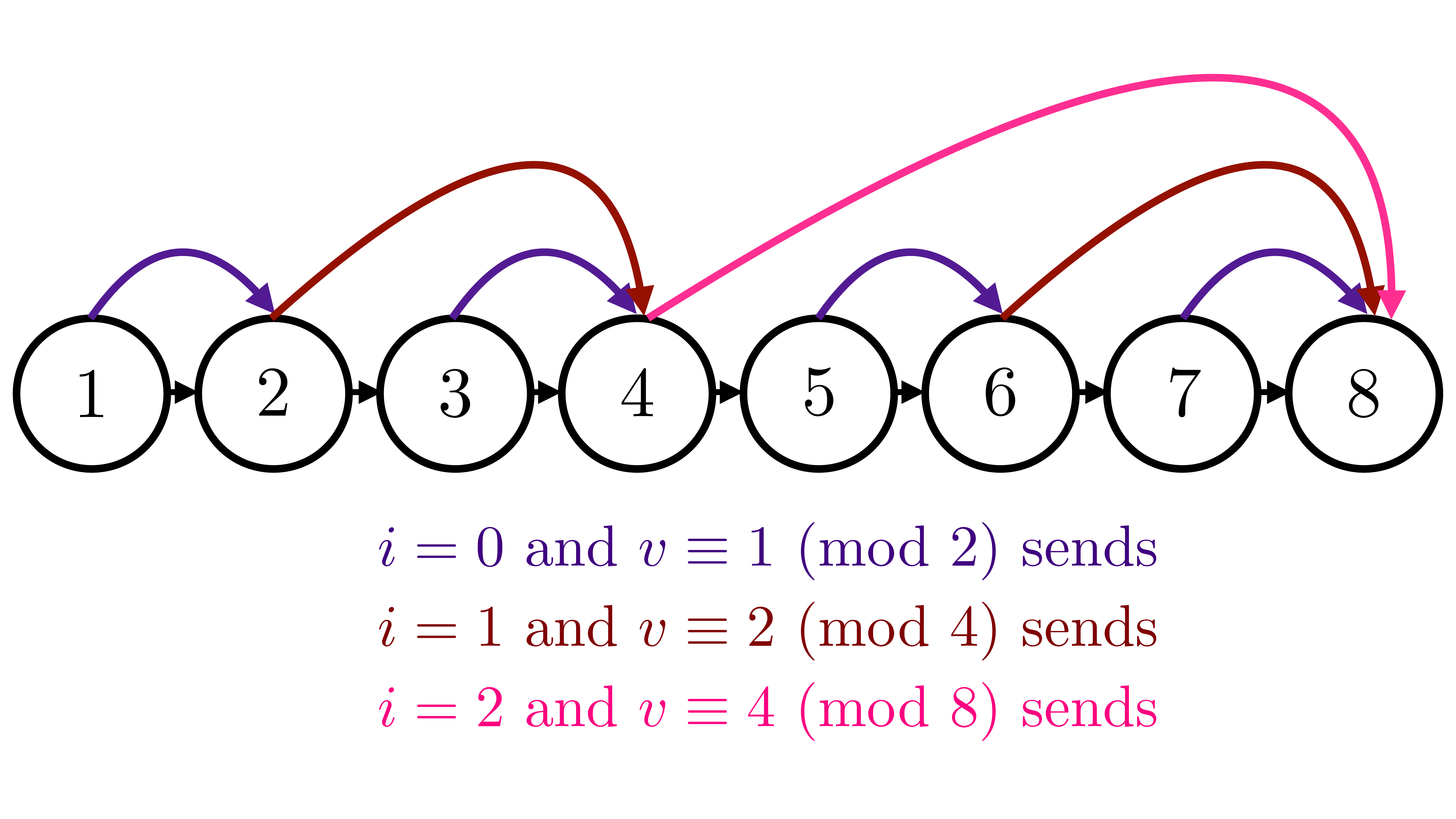}
	\caption{Illustration of \calg{alg:pathSCut}. Source of colored edge gives $v$ that transmits $S_i(v)$ and sink gives $u$ that updates $S_{i+1}(u)$ (assuming no edges broken). Black edges give path edges.}\label{fig:path-shortcut}
\end{figure}
}\fullOnly{\scutPathFig}

We now turn to describing the overall shortcut construction algorithm, \calg{alg:SCutDet}, and analyze the resulting block parameter there. We limit message complexity by only allowing sub-part representatives to send a message requesting that an edge be used in their part's shortcut. As we show, each bottom-up computation yields good shortcuts for a constant fraction of parts. Thus, after each bottom-up computation, we can use our block parameter verification algorithm -- \Cref{lem:verifyb} -- to identify the parts for which our shortcut construction succeeded and freeze the shortcut edges of said parts. The correctness and runtime of \calg{alg:SCutDet} is given by \Cref{lem:ScutDet}.

\gdef\SCutDetPC{
\begin{algorithm}
	\caption{Deterministic shortcut construction.}
	\label{alg:SCutDet}
	\begin{algorithmic}[1]
	\Statex \textbf{Input}: 
	partition of $V$, $(P_i)_{i=1}^N$ with leader $l_i$ known by $v \in P_i$;
	\Statex  \color{white}\textbf{Input}: \color{black} 
	sub-part division
	\Statex  \color{white}\textbf{Input}: \color{black}
	BFS-tree $T$
	\Statex \textbf{Output}: $T$-restricted shortcut with congestion $\tilde{O}(c)$ and block parameter $<3b$
	\State Initially all $P_i$ are \textit{active}.
	\State Compute heavy path decomposition of $T$.
	\For{$j = 1,2,\dots,O(\log n)$}
		\ForAll{$v\in V$}
			\If{$v$ is representative in active part $P_i$}
				\State $S_j(v) \leftarrow \{l_i\}$. 
			\Else
				\State $S_j(v) \leftarrow \emptyset$.
			\EndIf
		\EndFor
	
		\State Set each heavy path with no incoming light edges \textit{active}.
		\For{$\lfloor \log n \rfloor$ repetitions}
			\State Let $S_f$ be the output of $\calg{alg:pathSCut}$ run on all active heavy paths.
			\State For active path sink node $v$ and light edge $(v, u)$ let $S_j(u) = S_{j-1}(u) \cup \bigcup_v S_{f}(v)$. \label{line:activePar}
			\State Set all active paths inactive and all heavy paths with source $u$ as in \Cref{line:activePar} active.
		\EndFor
		\State Set parts with block parameter $< 3b$ \textit{inactive} (see \Cref{lem:verifyb}).
	\EndFor\\
	\Return $\cup_j S_j(v)$ as $v$'s shortcut edges
	\end{algorithmic}
\end{algorithm}}\fullOnly{\SCutDetPC}

\begin{restatable}{lem}{SCutDet}
\label{lem:ScutDet}
	Given:\ partition $(P_i)_{i=1}^N$ where $v \in P_i$ knows leader $l_i \in P_i$; a tree $T$ of depth $D$ which admits a $T$-restricted shortcut of congestion $c$ and block parameter $b$; and a sub-part division, \calg{alg:SCutDet} deterministically computes in $\tilde{O}(b(c + D))$ rounds and $\tilde{O}(m)$ messages, a shortcut with congestion $O(c\log n)$ and block parameter $O(b)$.
\end{restatable}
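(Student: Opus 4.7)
The plan is to analyze \calg{alg:SCutDet} along three axes: runtime, message complexity, and output quality (congestion and block parameter), with the last being the main obstacle.

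\textbf{Runtime and messages.} First I would observe that the heavy path decomposition of $T$ can be computed in $O(D)$ rounds using $O(n)$ messages by a simple upcast of subtree sizes. Each outer iteration of \calg{alg:SCutDet} consists of $O(\log n)$ inner repetitions of \calg{alg:pathSCut} (costing $O(c\log D + D)$ rounds each by \Cref{lem:path-shortcut}), the light-edge propagation step (costing $O(1)$ rounds per repetition along the tree), and one invocation of the block-parameter verification from \Cref{lem:verifyb} (costing $\tilde{O}(b(D+c))$ rounds and $\tilde{O}(m)$ messages). Summing and using $O(\log n)$ outer iterations gives the claimed $\tilde{O}(b(c+D))$ rounds. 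For messages, I would exploit that only sub-part representatives ever inject a nonzero $S_j(v)$; hence only the $\tilde{O}(n/D)$ representatives ever push information up $T$, and each such piece traverses at most $O(D)$ tree edges during the path shortcut and light-edge forwarding, giving $\tilde{O}(n)$ messages in the shortcut construction itself. Adding the $\tilde{O}(m)$ messages needed by \Cref{lem:verifyb} yields the claimed $\tilde{O}(m)$ total.

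\textbf{Congestion.} I would next bound the congestion of the returned shortcut. By \Cref{lem:path-shortcut}, a single invocation of \calg{alg:pathSCut} on a heavy path produces congestion $O(c\log D)$ on every edge of that path. Because any root-to-leaf path in $T$ crosses at most $\lfloor\log_2 n\rfloor$ heavy paths, one outer iteration of \calg{alg:SCutDet} contributes at most $O(c \log D \log n)$ to the congestion of any single tree edge. Summing over the $O(\log n)$ outer iterations gives congestion $O(c \log^3 n)$ on each edge, which matches the claimed $\tilde{O}(c)$ bound.

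\textbf{Block parameter (main obstacle).} The difficult step is to show that after $O(\log n)$ outer iterations every part is deactivated, i.e.\ every part obtains block parameter $O(b)$. The plan is to follow the charging strategy of the \textsc{CoreFast} analysis of \cite{haeupler2016low}, adapted to heavy path decompositions. Fix one outer iteration $j$ and let $\mathcal A_j$ denote the set of active parts entering iteration $j$. Each edge broken inside \calg{alg:pathSCut} is one that was demanded by at least $2c$ active parts' representatives. I would charge such ``bad'' events to a hypothetical optimal $T$-restricted shortcut of congestion $c$ and block parameter $b$ whose existence is guaranteed by hypothesis. Using the fact that along each heavy path the optimal shortcut uses each edge at most $c$ times, a standard counting argument shows that the number of parts whose blocks get shattered (over and above the $b$ blocks they already had) during iteration $j$ is at most a constant fraction of $|\mathcal A_j|$; equivalently, a constant fraction of parts in $\mathcal A_j$ end iteration $j$ with block parameter at most $3b$. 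By \Cref{lem:verifyb}, these parts are detected and deactivated, and their assigned edges are frozen (so they no longer participate in future congestion). Thus $|\mathcal A_{j+1}| \le (1-\Omega(1))|\mathcal A_j|$, which gives $\mathcal A_j = \emptyset$ after $O(\log n)$ outer iterations. The returned shortcut is the union of these frozen edges per part, and every part in it has block parameter at most $3b = O(b)$, completing the proof. The main technical care is in making the charging above go through with heavy-path decompositions: one must argue that splitting into heavy paths does not inflate the effective optimal congestion beyond $O(c)$ on any path, which follows since the optimal shortcut's edges, when restricted to any single heavy path, still contribute at most $c$ per edge.
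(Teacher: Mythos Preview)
Your proposal is correct and follows essentially the same approach as the paper: the same runtime/message accounting, \Cref{lem:path-shortcut} for the per-iteration congestion bound, and a charging argument against the hypothesized optimal shortcut to show that a constant fraction of active parts become good each outer iteration (the paper makes this explicit by defining $U_j$ as the optimal shortcut's edges broken in iteration $j$, bounding $|U_j|$ via the at-most-$b$ excess requests per active part, and then bounding the number of bad parts by $|U_j|\cdot c/(2b)$). The only minor wrinkle is that your congestion argument's extra $\log n$ factor from the heavy-path crossing count is unnecessary---each tree edge lies in exactly one heavy path, so \Cref{lem:path-shortcut} already gives $O(c\log D)$ per outer iteration directly---but this does not affect the $\tilde O(c)$ conclusion.
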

\gdef\SCutDetProof{
\begin{proof}
	We first prove the runtime and message complexity. As mentioned above, a heavy path decomposition of $T$ can be computed in $O(D)$ rounds using $O(n)$ messages. We now bound the message and round complexity of each iteration. First note that we can inform every path if it has a light edge in $O(D)$ rounds with $O(n)$ messages. Next, running \calg{alg:pathSCut} $O(\log n)$ times -- once on each heavy path -- requires $O(c\log D\log n + D\log n)$ rounds and $O(n \log n)$ messages. Lastly, notice that informing every node in a path that the path is now active requires $O(D)$ rounds using $O(n)$ messages. Lastly, by \Cref{lem:verifyb}, running \calg{alg:pathSCut} is within our stated bounds. Summing over iterations, we conclude that the overall round and message complexities are $\tilde{O}(b(c + D))$ and $\tilde{O}(m)$ respectively.
		
We now prove correctness. Notice that by \Cref{lem:path-shortcut} the number of parts assigned to an edge in any particular iteration is at most $O(c \log D)$ and so the overall congestion on any edge is at most $O(c \log D \log n) = \tilde{O}(c)$. 

We now analyze the block parameter. In particular, we argue that the number of active parts is at least halved in each iteration. Let $A_j$ be the set of active parts in iteration $j$. Let $U_j$ be the set of heavy edges used by $\mathcal{H}$ but broken in iteration $j$ and therefore not assigned to any parts by $S_j$. Each edge in $U_j$ received at least $2c-c=c$ more requests by parts to use it than in $\mathcal{H}$. Thus each edge in $U_j$ receives at least $2c-c=c$ requests from parts in $A_j$. However, each part in $A_j$ can contribute at most $b$ such additional requests to a broken edge, as each block can only send one additional request towards the tree's root. Consequently, we have $|U_j|\leq A_j\frac{b}{2c}$.  Next, we say an active part is \emph{bad} in iteration $j$ if more than $2b$ of its edges of $\mathcal{H}$ are broken in iteration $j$, and \emph{good} in iteration $j$ otherwise. Note that for a good part the number of blocks in the output shortcut is at most $3b = O(b)$. On the other hand, every broken heavy edge used in $\mathcal{H}$ is used at most $c$ times in $\mathcal{H}$. We conclude that the number of bad active parts is at most $|U_j|\frac{c}{2b}$. 
	
	Combining both upper and lower bounds on $|U_j|$, the number of bad parts active parts is at most $A_j/2$ in iteration $j$. Thus, after $O(\log n)$ iterations all parts will be marked inactive, meaning the block parameter in the returned shortcut is at most $3b$.
\end{proof}
}
\fullOnly{\SCutDetProof}
\shortOnly{
\begin{proof}[Proof (Sketch)]
The stated round complexity comes from summing sub-routines. The message complexity follows similarly and requires noting that only the $\tilde{O}(n/D)$ sub-part representatives forward their part ID up at most $D$ nodes when running \Cref{alg:pathSCut}. Lastly, there can only be a constant number of parts with bad block parameter in each iteration. Each extra block we introduce corresponds to a broken edge that is not broken by the original shortcut. However, each such edge also corresponds to some number of parts that tried to use it and since a part can only try to use so many edges in our shortcut we have that not that many extra edges can be broken.
\end{proof}
}

\newpage
\bibliographystyle{plainnat}
\bibliography{abb,ultimate}

\newpage
\appendix

\section*{Appendix}

\section{Applications of Our PA Algorithms}
\label{sec:appApps}

Here we outline the applications of our round- and message-optimal PA algorithms for multiple distributed graph problems. We start with a discussion of the problems referred to in \Cref{sec:apps}, and then discuss additional applications of our PA algorithm to optimization as well as verification problems, in \Cref{sec:further-apps}.

\subsection{Deferred Proofs of \Cref{sec:apps}}

Here we address how to apply our PA algorithm to solve MST, Approximate Min-Cut and Approximate SSSP, starting with a formal definition of these problems.

\shortOnly{
\probDfnAll
	\probDfnMST
	\probDfnMC
	\probDfnSSSP
\\}

We now restate the round and message complexities we obtain for these problems using our new PA algorithm and discuss the algorithms used to achieve these bounds. As before, recall that since every graph admits a shortcut with $b=1$ and $c = \sqrt{n}$, our algorithms simultaneously achieve worst-case optimal $\tilde{O}(m)$ message complexity and worst-case optimal $\tilde{O}(D  + \sqrt{n})$ round complexity.

\mstcor*
\mstProof

\mincutcor*
\mincutProof

\spcor*
\spProof

\subsection{Further Applications of Our PA Algorithms}\label{sec:further-apps}
Here we mention some further applications of PA, all of which prior work has show to have PA as their round and communication bottleneck. For all of these problems our new PA algorithm of \Cref{thm:partComputation} therefore yields $\tilde{O}(D+\sqrt n)$-round and $\tilde{O}(m)$-message algorithms.

\paragraph{Graph Verification Problems.} In 
\cite{dassarma2012distributed}, Das Sarma et 
al.~provided an extensive list of lower bounds 
for optimization problems (many of which we 
referred to throughout this paper, as their 
lower bounds prove our algorithms' round 
complexity to be optimal). Das Sarma et 
al.~further showed that their lower bounds 
carry over to \emph{verification} problems. For 
these problems, the input is a graph $G$ and a 
subgraph $H$ of $G$ and an algorithm must 
verify whether this subgraph satisfies some 
property, such as whether $H$ is a spanning tree 
or $H$ is a cut (see 
\cite{dassarma2012distributed} for more). 

Das Sarma et al.~also provided algorithms for 
this long list of verification problems, relying heavily on an optimal MST algorithm and 
the following connected component algorithm
of \citet[Algorithm 5]{thurimella1997sub}. Thurimella's algorithm, given a graph $G$ and 
subgraph $H$ as above, outputs a label $\ell(v)$ for each vertex $v\in V(G)$ such that 
$\ell(u)=\ell(v)$ if and only if $u$ and $v$ are in the same connected component of 
$H$. The problem solved by 
Thurimella is easily cast as an instance of PA, 
by having each part elect a leader in a 
connected component in $H$ -- say, a node of 
minimum ID -- and use the leader's ID as a 
label.\footnote{We note that for bipartiteness verification, 
Das Sarma et al.~relied on the algorithm of \cite{thurimella1997sub} also outputting a 
rooted spanning tree of each connected 
component of $H$ with each vertex knowing its 
level in the tree. As our PA algorithm 
maintains such rooted spanning trees, it can 
also be used to solve this verification problem 
within the same bounds.} Without repeating the arguments of 
\citet{dassarma2012distributed}, we note that 
as MST and Thurimella's algorithm require 
$\tilde{O}(D+\sqrt n)$ rounds (based on 
\cite{kutten1995fast}), so do the algorithms of 
Das Sarma et al., and this is tight for these 
verification problems by their lower bounds. 
Our MST and PA algorithms show that for the long list of 
verification problems Das Sarma et al.~considered, optimal 
round complexity does not preclude optimal 
message complexity, as we can attain both 
simultaneously.

\begin{cor}
	All the graph verification problems considered in \cite[Section 8]{dassarma2012distributed} can be solved in an optimal $\tilde{O}(D+\sqrt n)$ rounds and $\tilde{O}(m)$ messages.
\end{cor}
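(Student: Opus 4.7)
The plan is to follow the structure already laid out in the paragraph preceding the corollary statement. Specifically, \citet{dassarma2012distributed} reduce each of the verification problems in their Section 8 to a black-box combination of two subroutines: (i) an MST algorithm and (ii) Thurimella's connected-components labeling algorithm on a subgraph $H$ of $G$ \cite{thurimella1997sub}. Their round-complexity analysis goes through unchanged regardless of which implementations of (i) and (ii) are plugged in, so it suffices to exhibit implementations of (i) and (ii) that simultaneously meet the $\tilde O(D+\sqrt n)$-round and $\tilde O(m)$-message bounds, and then verify that the remaining glue logic in Das Sarma et al.'s reductions adds no asymptotic overhead in either measure.

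For (i), I would directly invoke \Cref{cor:mst}, which gives an MST algorithm using $\tilde O(D+\sqrt n)$ rounds and $\tilde O(m)$ messages w.h.p.\ (since every graph admits a tree-restricted shortcut with $b=1$ and $c=\sqrt n$). For (ii), I would cast Thurimella's problem as a PA instance on $H$ exactly as the paragraph above the corollary suggests: take the parts $(P_i)_{i=1}^N$ to be the connected components of $H$ (each $P_i$ is connected in $H$ by definition), take $\mathrm{val}(v)=\mathrm{ID}(v)$, and take $f=\min$. The output of PA is then a common label $\ell(v)=\min_{u\in P_i}\mathrm{ID}(u)$ for every $v$ in component $P_i$, which is exactly what Thurimella's algorithm produces. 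By \Cref{thm:partComputation}, this runs in $\tilde O(D+\sqrt n)$ rounds and $\tilde O(m)$ messages w.h.p.

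One subtlety to address explicitly is the footnote concern for bipartiteness verification, where Das Sarma et al.\ additionally need each component of $H$ to know a rooted spanning tree with level information. I would observe that the sub-part division machinery constructed inside our PA algorithm already supplies a spanning structure within each part, and that levels in such a tree can themselves be computed by one more call to PA (with the standard trick of aggregating $\min$ of depths along tree edges). Thus all information Das Sarma et al.\ require is produced within the stated complexity bounds.

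The main obstacle I expect is not conceptual but bookkeeping: Das Sarma et al.\ stated their framework purely in terms of round complexity, so I would need to re-examine each reduction in their Section 8 to confirm that the ``glue'' operations between calls to MST and Thurimella (e.g.\ local edge relabelings, constant numbers of broadcasts of aggregated scalars, etc.) use only $\tilde O(m)$ messages in aggregate. In every case these glue steps are either $O(1)$ invocations of broadcast/convergecast on a BFS tree (which cost $\tilde O(n)$ messages via \Cref{lem:subTreeRouting}) or $O(\log n)$ additional PA calls, so no single reduction blows past the $\tilde O(m)$ budget. Combining these observations with the round analysis inherited unchanged from \cite{dassarma2012distributed} yields the claimed simultaneous $\tilde O(D+\sqrt n)$-round, $\tilde O(m)$-message bound for every verification problem on the list.
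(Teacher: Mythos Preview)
Your proposal is correct and follows essentially the same approach as the paper: the paper's argument (given in the paragraph immediately preceding the corollary, with no separate formal proof) likewise reduces the verification problems of \cite{dassarma2012distributed} to MST plus a PA-based realization of Thurimella's connected-component labeling, handles the bipartiteness footnote via the rooted spanning trees maintained by the PA algorithm, and then inherits the round analysis of \cite{dassarma2012distributed}. If anything, your explicit bookkeeping for the message complexity of the ``glue'' steps is slightly more careful than the paper's own informal treatment.
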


\paragraph{Approximation of Minimum-Weight Connected Dominating Set}
Another application of our PA algorithms follows from the work of \citet{ghaffari2014near}. In that work, Ghaffari shows that the algorithm of \citet{thurimella1997sub}, 
discussed above, can be used to to compute an $O(\log n)$-approximate minimum weight 
connected dominating set (a set of nodes $S$ such that all vertices in $G$ are at distance 
at most one from a node in $S$). In particular, Ghaffari relied on the ability to extend 
Thurimella's algorithm to the case where nodes also have some value $x(v)$ assigned to them so 
that the label of nodes in a connected 
component can be equal to: (A) the list of the 
$k=O(1)$ largest values $x(v)$ in the component, or (B) the sum of values $x(v)$ in the component. Both these applications can be cast as instances of PA. Plugging in our PA algorithm into Ghaffari's algorithm \cite{ghaffari2014near}, we obtain the following.

\begin{cor}
	There exists an $\tilde{O}(D+\sqrt n)$-round, $\tilde{O}(m)$-message algorithm which computes an $O(\log n)$-approximate minimum weight connected dominating set.
\end{cor}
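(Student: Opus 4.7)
The plan is to reduce the corollary to a black-box application of our main result, \Cref{thm:partComputation}, by simulating Ghaffari's $O(\log n)$-approximation algorithm for minimum-weight connected dominating set \cite{ghaffari2014near}. Ghaffari's algorithm invokes two extensions of Thurimella's connected-components labelling routine \cite{thurimella1997sub} as its computational primitives, together with $\poly\log n$ calls to MST; its round and message complexity are dominated by these primitives. Thus it suffices to (i) exhibit each of these primitives as a PA instance that meets \Cref{def:pwAgg}, and (ii) sum the costs of the $\poly\log n$ invocations.

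First I would verify that Thurimella's labelling routine itself is a PA instance: the input is the subgraph $H$ with each connected component a part, each part has a natural leader (say, the minimum-ID node), and $\mathrm{val}(v)=\mathrm{ID}(v)$ with $f(\cdot,\cdot)=\min$; the output label $\ell(v)=\min_{u\in P(v)} \mathrm{ID}(u)$ then satisfies Thurimella's specification. Next I would handle the two extensions used by Ghaffari. For variant (B), where each component must learn $\sum_{v\in P} x(v)$ for some $O(\log n)$-bit $x(v)$, take $\mathrm{val}(v)=x(v)$ and $f=+$; since edge weights and values lie in $[1,\poly(n)]$, sums fit in $O(\log n)$ bits, and $+$ is commutative and associative. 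For variant (A), where each component must learn the list of the $k=O(1)$ largest values $x(v)$ in the component, take $\mathrm{val}(v)$ to be the length-$k$ tuple $(x(v),-\infty,\dots,-\infty)$ and let $f$ take two length-$k$ sorted tuples and return the top $k$ entries of their multiset union; this output is $O(k\log n)=O(\log n)$ bits, and the top-$k$ operator is readily checked to be commutative and associative. Each of these is therefore a PA instance to which \Cref{thm:partComputation} applies.

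Plugging \Cref{thm:partComputation} into Ghaffari's algorithm as a replacement for every Thurimella-type call, and plugging \Cref{cor:mst} in for every MST call, each of the $\poly\log n$ invocations costs $\tilde O(D+\sqrt n)$ rounds and $\tilde O(m)$ messages, since every graph admits a tree-restricted shortcut with $b=1$ and $c=\sqrt n$. Summing over the $\poly\log n$ calls absorbs into the $\tilde O(\cdot)$ notation, so the overall complexity is $\tilde O(D+\sqrt n)$ rounds and $\tilde O(m)$ messages, yielding the corollary.

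The main obstacle I anticipate is purely bookkeeping rather than conceptual: making sure that the values aggregated by variants (A) and (B), as instantiated inside Ghaffari's iterative reweighting scheme, always stay within $O(\log n)$ bits throughout the $\poly\log n$ iterations, and that the parts fed to PA at each step form connected subgraphs of $G$ (which they do, since they are connected components of subgraphs of $G$ constructed by Ghaffari's algorithm). Once this is verified, the bounds follow immediately from \Cref{thm:partComputation} and \Cref{cor:mst}.
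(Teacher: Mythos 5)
Your proposal is correct and follows essentially the same route as the paper: the paper likewise observes that Ghaffari's algorithm \cite{ghaffari2014near} reduces to Thurimella-style component labelling plus the two extensions (A) top-$k$ values and (B) component sums, casts both as PA instances, and plugs in \Cref{thm:partComputation}. Your write-up merely makes explicit details the paper leaves implicit (the $f=\min$/$f=+$/top-$k$ merge operators, the $O(\log n)$-bit and connectivity checks), which are all routine and verify as you anticipate.
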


\paragraph{$k$-dominating sets.} Another 
distributed primitive used in multiple 
distributed graph algorithms is the problem of 
computing an $O(n/k)$-node $k$-dominating set. 
That is, a set of nodes $S$ such that every 
node in $G$ is at distance at most $k$ from 
some node in $S$. This problem has found 
applications in distributed algorithm for MST 
\cite{kutten1995fast} and 
$(1+\epsilon)$-approximate eccentricity 
computation \cite{holzer2012optimal}. For this 
problem $\tilde{O}(k)$-round algorithms are known, \cite{kutten1995fast}), including some 
with linear message complexity 
\cite{penso2004distributed}. However, for large 
$k$, i.e. $k\gg \max\{D,\sqrt n\}$, no 
$\tilde{O}(D+\sqrt n)$-round, 
$\tilde{O}(m)$-message algorithm was known. 
Such an algorithm follows immediately from a 
simple generalization of our sub-part division 
algorithm, as follows. As in 
\Cref{alg:subPartDet}, we repeatedly merge 
sub-parts, marking a sub-part as complete when 
it attains some size. Unlike the above 
algorithm, this threshold is chosen to be $k/6$ 
rather than $D$. By arguments which are a 
simple generalization of 
\Cref{alg:subPartDet}'s analysis, this implies 
that the obtained sub-parts have diameter at 
most $k$, and that each sub-part contains at 
least $k/6$ nodes, one of which is its 
representative. This set of representatives 
therefore has cardinality at most $6n/k$ and it 
forms a $k$-dominating set. The one delicate 
point to notice is that now we can solve 
Part-Wise Aggregation using our round- and 
message-optimal algorithm for PA -- unlike in 
\Cref{alg:subPartDet}, which is a sub-routine in our PA algorithm. In particular, even if 
$k\gg \max\{D,\sqrt n\}$, this can be done in $\tilde{O}(D+\sqrt n)$ rounds (and 
$\tilde{O}(m)$ messages). Therefore, as each of the $O(\log n)$ iterations of this algorithm 
can be implemented using some $O(\log^*n)$ many calls to PA (by Lemma \ref{lem:starJoinDet}), together with some local computation, we obtain the following corollary of \Cref{thm:partComputation}.

\begin{cor}
	For any integer $k$, there exists an $\tilde{O}(D+\sqrt n)$-round, $\tilde{O}(m)$-message algorithm which computes a $k$-dominating set of size $O(n/k)$.
\end{cor}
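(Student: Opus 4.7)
The plan is to prove this corollary by adapting the deterministic sub-part division construction of \calg{alg:subPartDet} with two modifications: (i) changing the completion threshold from $D$ to $k/6$, and (ii) using the round- and message-optimal PA algorithm of \Cref{thm:partComputation} itself as the aggregation primitive inside the star-joining procedure, rather than the ad hoc intra-sub-part aggregation used in \calg{alg:subPartDet}. The representatives of the resulting complete sub-parts will constitute the desired $k$-dominating set.

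More concretely, I would first run a BFS leader election and compute a BFS tree, so that the whole graph is a single ``trivial'' part for which we know the leader. Then I would initialize every vertex as a singleton incomplete sub-part, and for $O(\log n)$ iterations repeat the following: every incomplete sub-part selects an outgoing edge (to some other sub-part, preferring an incomplete neighbor if one exists); we invoke \Cref{lem:starJoinDet} to compute a star joining on the current sub-parts, using \Cref{thm:partComputation} as the PA algorithm $\mathcal{A}$; we merge each joiner with its receiver along the chosen edge, have the joiner reorient its tree edges toward the joining endpoint, and mark a sub-part as \emph{complete} as soon as it contains at least $k/6$ nodes. After $O(\log n)$ iterations, every sub-part either is complete or can no longer find a neighbor to merge with, and in the latter case its entire connected component of $G$ has fewer than $k/6$ nodes, so we can simply fold such components into a single sub-part locally (still of size at most $k/6$ and diameter at most the component diameter $\leq k/6$). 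Finally, each completed sub-part declares its representative, and the union of all representatives is output as the $k$-dominating set $S$.

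The correctness analysis closely mirrors \Cref{lem:comDiv}. By the same inductive argument as there, each complete sub-part has a rooted spanning tree of diameter at most $4 \cdot (k/6) < k$, since a sub-part becomes complete the first time it exceeds $k/6$ nodes and then further joiners contribute at most another $k/6$ depth each. Hence every node is within $k$ hops of its sub-part's representative, so $S$ is indeed a $k$-dominating set; and since every sub-part has at least $k/6$ nodes (or fills a tiny component), $|S| = O(n/k)$.

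The complexity bound is where the main subtlety lies, and it is the reason the corollary cannot be read off directly from \Cref{lem:comDiv}: when $k \gg D + \sqrt{n}$ the sub-parts have diameter much larger than $D$, so the trivial intra-part aggregation used inside \calg{alg:subPartDet} (which crucially relies on sub-parts having diameter $\leq D$) is no longer affordable. The fix is to observe that at every iteration the current collection of incomplete and complete sub-parts forms a partition of $V$ into connected subgraphs, which is exactly a PA instance on $G$; and since $G$ admits a tree-restricted shortcut with $b=1$ and $c = \sqrt{n}$, \Cref{thm:partComputation} solves any such PA instance in $\tilde{O}(D + \sqrt{n})$ rounds and $\tilde{O}(m)$ messages, \emph{regardless of the diameters of the sub-parts themselves}. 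Each of the $O(\log n)$ outer iterations uses $O(\log^* n)$ calls to PA (by \Cref{lem:starJoinDet}) plus a constant number of additional PA calls for selecting outgoing edges, performing merges, reorienting trees, and testing completion. Summing over all iterations gives the stated $\tilde{O}(D + \sqrt{n})$ round and $\tilde{O}(m)$ message bounds, completing the proof.
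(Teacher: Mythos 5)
Your proposal is correct and follows essentially the same route as the paper's own proof: the paper likewise generalizes \calg{alg:subPartDet} with completion threshold $k/6$ in place of $D$, takes the sub-part representatives as the $k$-dominating set (diameter at most $k$, size at most $6n/k$), and flags exactly the subtlety you identified---that one may now invoke the full round- and message-optimal PA algorithm of \Cref{thm:partComputation}, since it is no longer a sub-routine of itself, so each of the $O(\log n)$ iterations costs $O(\log^* n)$ PA calls (via \Cref{lem:starJoinDet}) at $\tilde{O}(D+\sqrt n)$ rounds and $\tilde{O}(m)$ messages each. No changes needed.
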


This last corollary bolsters our confidence that the ubiquitous $\tilde{\Theta}(D+\sqrt n)$ bounds 
for distributed graph algorithms can be matched with $\tilde{O}(m)$ messages for an even wider range of problems not discussed in this paper.

\section{Dispensing with Known Leader Assumption}\label{sec:PASolns}
Throughout this paper we have assumed that parts always know a leader. That is for every part $P_i$ every node $v \in P_i$ knows the ID of some leader $l_i \in P_i$. We solved PA assuming that this holds. We now show that this assumption can dispensed with. In particular, we demonstrate that an algorithm that solves PA with the assumption of a known leader for each part can be converted into one that makes no such assumption with only logarithmic overhead in round and message complexities. The conversion is deterministic and so it demonstrates that a known leader is not required for either or deterministic or our randomized results. 

Our algorithm is \Cref{alg:convLeaderSolution} and works as follows: start with the singleton partition where every node is its own leader; repeatedly coarsen this partition $O(\log n)$ times until it matches the input PA partition by applying our PA solution that assumes that a leader is known to merge the stars given by a star joining. At each step in the coarsening we maintain the invariant that every part knows a leader and so in the end we need only solve PA with a known leader which we can do by assumption.

\gdef\solvePAPC{	
\begin{algorithm}
	\caption{PA without leaders.}
	\label{alg:convLeaderSolution}
	\begin{algorithmic}[1]
	\Statex \textbf{Input}: PA instance with parts $(P_i)_{i=1}^N$; 
	\Statex  \color{white}\textbf{Input}: \color{black} PA algorithm, $\mathcal{A}$, that assumes every part knows a leader.
	\Statex \textbf{Output}: a solution to the input PA problem.
	\ForAll{$i \in |V|$}
		\State Set $P_i' \leftarrow \{ i\}$ and $l_i \leftarrow i$. \Comment{Each $P_i'$ maintains a leader $l_i$, initially set to $i$}
	\EndFor
	\For{$O(\log n)$ rounds}
		\ForAll{part $P_i'$} 
			\State Pick some $e_i = (u, v) \in P_i' \times (V\setminus P_i')$ by running $\mathcal{A}$ where $u, v$ are in the same $P_i$.
		\EndFor
		\State Compute a star joining with \calg{alg:starJoinDet} over the $P_i'$s using edges $\{e_i\}$.
		\ForAll{part $P_i'$ which joined $P_j'$ in the star joining} 
			\State Inform each $v \in P_i'$ that their leader is now $l_j$ by running $\mathcal{A}$.
			\State Merge $P_i'$ into $P_j'$.
		\EndFor
	\EndFor
	\State Run $\mathcal{A}$ on the PA instance consisting of the $P_i'$s, each with a known leader $l_i$.
	\end{algorithmic}
\end{algorithm}
}\solvePAPC

\begin{restatable}{lem}{test}
Given a PA instance where no leaders are known and a PA algorithm, $\mathcal{A}$, that assumes leaders are known using $R$ rounds and $M$ messages, \Cref{alg:convLeaderSolution} solves the PA instance with no leaders in $\tilde{O}(R)$ rounds and $\tilde{O}(M)$ messages.
\end{restatable}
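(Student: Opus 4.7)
The plan is to establish three things in sequence: (i) a structural invariant maintained by \Cref{alg:convLeaderSolution}; (ii) a potential-type argument showing the invariant forces termination within $O(\log n)$ coarsening iterations; and (iii) a per-iteration cost bound that aggregates to the claimed $\tilde{O}(R)$ rounds and $\tilde{O}(M)$ messages. Throughout, the key conceptual point is that \Cref{alg:convLeaderSolution} never calls $\mathcal{A}$ on a partition that violates $\mathcal{A}$'s precondition, because each intermediate partition $(P_i')$ is a refinement of the input partition $(P_i)_{i=1}^N$ and every intermediate part has a known leader.

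First I would state and prove the following invariant by induction on the iteration: at the start of each iteration, (a) every current part $P_i'$ is contained in some input part $P_j$, (b) every node $v \in P_i'$ knows the ID of the designated leader $l_i$ of $P_i'$, and (c) the subgraph induced by $P_i'$ on $G$ is connected. The base case (singletons) is immediate. For the inductive step, note that the edge $e_i$ selected in each iteration is restricted to have both endpoints inside the same $P_j$, which preserves (a) and (c) under the star-joining merges; and the merge step explicitly runs $\mathcal{A}$ to propagate the new leader ID to all nodes of the absorbed part, preserving (b). Because (a)--(c) hold at every iteration, every call to $\mathcal{A}$ is on a legitimate partition with known leaders and connected parts, so $\mathcal{A}$'s guarantees apply.

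Next I would bound the number of coarsening iterations. For each input part $P_j$, let $n_j^{(t)}$ be the number of intermediate parts $P_i' \subseteq P_j$ at iteration $t$. Whenever $n_j^{(t)} > 1$ there exists an edge inside $P_j$ crossing two intermediate parts (by connectivity of $P_j$), so each such $P_i'$ can select a valid $e_i$. By \Cref{lem:starJoinDet}, a constant fraction of the parts with an outgoing edge are joiners and each joiner merges into a receiver in the star joining; restricted to the subgraph induced by $P_j$, this yields $n_j^{(t+1)} \le (1-\Omega(1)) \, n_j^{(t)}$. Parts that are already $P_j$ (i.e.\ $n_j^{(t)} = 1$) have no valid $e_i$ and remain stable. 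Thus after $O(\log n)$ iterations $n_j^{(t)} = 1$ for every $j$, so the intermediate partition coincides with $(P_i)_{i=1}^N$, at which point the final call to $\mathcal{A}$ solves PA. The main obstacle here is precisely verifying the geometric-decrease statement per input part: one must confirm that restricting the star joining's analysis to one ``super-graph layer'' (the edges $e_i$ inside a single $P_j$) still yields the constant-fraction merger guarantee of \Cref{lem:starJoinDet}, which follows because \Cref{alg:starJoinDet}'s bound is purely combinatorial in the super-graph of chosen edges.

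Finally, for the complexity, each iteration performs $O(1)$ invocations of $\mathcal{A}$ (to select $e_i$, to propagate new leaders, and to update any bookkeeping) plus one star joining, which by \Cref{lem:starJoinDet} uses $O(\log^* n)$ further invocations of $\mathcal{A}$. Hence each iteration costs $\tilde{O}(R)$ rounds and $\tilde{O}(M)$ messages; summing over the $O(\log n)$ iterations and the single final call to $\mathcal{A}$ gives the claimed $\tilde{O}(R)$ rounds and $\tilde{O}(M)$ messages overall, and correctness of the solution returned in the last line follows from the correctness of $\mathcal{A}$ applied to the (now fully reconstructed) input partition with known leaders.
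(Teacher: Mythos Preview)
Your proposal is correct and follows essentially the same approach as the paper's proof: star-join to coarsen, argue a constant-fraction decrease per iteration so that $O(\log n)$ iterations suffice, and sum the $O(\log^* n)$ calls to $\mathcal{A}$ per iteration for the complexity bound. Your version is more careful than the paper's---you explicitly verify the connectivity and containment invariants needed for $\mathcal{A}$'s precondition, and you correctly isolate and dispatch the one subtlety (that the constant-fraction merge guarantee of \Cref{lem:starJoinDet} holds per input part, because the super-graph of chosen edges decomposes into disjoint components, one per $P_j$).
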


\begin{proof}
We first prove round and message complexities. Our algorithm runs $\mathcal{A}$ to solve PA with a known leader and \calg{alg:starJoinDet} to compute a star-joining logarithmically many times. The latter consists of $O(\log ^*n)$ calls to $\mathcal{A}$. Thus, the stated round and message complexities follow trivially.

We now argue correctness. Each round a constant fraction of the $P_j'$s participating in the algorithm get to merge by definition of a star joining and so $O(\log n)$ repetitions are sufficient to coarsen every $P_j'$ to a $P_i$. Moreover, $P_1', \ldots, P_{N'}'$ is valid input to $\mathcal{A}$ since we maintain the invariant that every node in a $P_j'$ has an elected leader. At the end of this coarsening our PA instance now has elected leaders. By the correctness of $\mathcal{A}$ our algorithm is correct.
\end{proof}

\shortOnly{
	
	\section{Deferred Proofs, Pseudocode and Figures of \Cref{sec:techSet}}
	\label{sec:appKLPA}
	
In this section we give pseudo-code for and the correctness and runtime proofs of our main PA algorithm that takes a sub-part division and shortcut as input. We also do the same for the algorithm of \citet*{haeupler2016low} which we use to aggregate within shortcut blocks and our algorithm for verifying block parameter.
	
	\SPAndBlocksFig	
		

	\KLPAWithSPDAndSC
	\algoIterationsFig
	\solveKLPA*
	\solveKLPAProof
	
	\verifyb*
	\verifybProof
	
	\section{Deferred Proofs and Pseduocode of \Cref{sec:rando}}
	\label{sec:appRand}
		In this section we give our randomized sub-part division and shortcut construction algorithms along with the properties of these algorithms.
	\subPartRandPC
	\SPDivRand*
	\SPDivRandProof
	
	\randSCPC
	\SCutRand*
	\SCutRandProof
	
	
	\section{Deferred Proofs, Pseudocode and Figures of \Cref{sec:det}}\label{sec:appDet}
	Here we provide pseudo-code and the correctness and runtime of our algorithms for deterministic star-joining constructions, deterministic sub-part division construction and deterministic shortcut construction.

	\starJoinPC
	\sJoin*
	\sJoinProof
	
	\detSPPC
	\SPDivDet*
	\SPDivDetProof
	
	\SCutPath*
	\SCutPathProof
	\scutPathFig
	
	\SCPathText
	\detSCPC	
	
	\SCutDetPC
	\SCutDet*
	\SCutDetProof


}

\section{Our Results In Tabular Form}\label{sec:fullResultsTable}
Throughout the paper we state our results in utmost generality by giving our algorithms' 
running time in terms of the optimal block parameter $b$ and congestion $c$. As stated in 
\Cref{thm:partComputation} and its Corollaries \ref{cor:mst}, \ref{cor:minCut} and 
\ref{cor:sp} and \Cref{sec:appApps}, for PA and the wide range of application problems we 
consider, our deterministic algorithms terminate in $\tilde{O}(b(D+c))$ rounds and our 
randomized algorithms terminate in $\tilde{O}(bD+c)$ rounds. To make these bounds 
more concrete, we review some known bounds on the parameters $b$ and $c$ in 
\Cref{table:parameter-bounds}, and then state the implied running times of all our algorithms 
for the above problems in \Cref{table:time-bounds}.\footnote{The two
exceptions to this rule are our 
$L^{\epsilon}$-approximate SSSP and $(1+\epsilon)$-approximate Min-Cut, for which 
the (randomized) bounds hold as stated in the table only for fixed (or polylogarithmic) $\epsilon$.}

\begin{table*}[h]
	\centering
	\begin{tabular}{|c|c|c|c|c|c|c|c|}
	        \noalign{\vskip .5mm}  

		\hline
		\multirow{2}{*}{} & General  & Planar  & Genus $g$  &  Treewidth $t$  & Pathwidth $p$  & Minor Free  \\
		{} &  \cite{ghaffari2016distributed} &  \cite{ghaffari2016distributed} & \cite{haeupler2016low} &   \cite{haeupler2016near} & \cite{haeupler2016near} & \cite{haeupler2018minor}  \\
		\hline
		\noalign{\vskip .5mm}  
		$b$  & $\bigstrut1$  & $O(\log D)$ & $O(\sqrt{g})$  & $O(t)$  & $p$ & $\tilde{O}(D)$ \\
		\hline   
		\noalign{\vskip .5mm}   
		$c$  & $\sqrt{n}$  & $\tilde{O}(D)$ &  $\tilde{O}(\sqrt{g}D)$ & $\tilde{O}(t)$& $p$ & $\tilde{O}(D)$ \\
		\hline
	\end{tabular}
	\captionsetup{justification=centering}
	\caption{Known bounds on block parameter, $b$, and congestion, $c$.}
	\label{table:parameter-bounds}
\end{table*}

\begin{table*}[h]
	\centering
	\begin{tabular}{|c|c|c|c|c|c|c|}
		\hline
		& General & Planar & Genus $g$ &  Treewidth $t$ & Pathwidth $p$ & Minor Free \\
		\hline

		Deterministic & \bigstrut$\tilde{O}(D+\sqrt{n})$  & $\tilde{O}(D)$ & $\tilde{O}(gD)$ & $\tilde{O}(tD + t^2)$ & $\tilde{O}(pD + p^2)$ & $\tilde{O}(D^2)$ \\
		\hline

		Randomized & \bigstrut$\tilde{O}(D+\sqrt{n})$  & $\tilde{O}(D)$ & $\tilde{O}(\sqrt{g}D)$  & $\tilde{O}(tD)$ & $\tilde{O}(pD)$ & $\tilde{O}(D^2)$ \\
		\hline
	\end{tabular}
	\captionsetup{justification=centering}
	\caption{Summary of running times of our algorithms.}
	\label{table:time-bounds}
\end{table*}

Reviewing \Cref{table:time-bounds}, we note that for all problems considered, a matching worst case round lower bound of $\tilde{\Omega}(D+\sqrt{n})$ is given by \citet{dassarma2012distributed}, while a trivial lower bound of $\Omega(D)$ holds for these problems for all graphs. Our algorithms match the worst case bounds and the $\Omega(D)$ lower bound (up to polylog terms) for any constant genus, treewidth and pathwidth, all while requiring only $\tilde{O}(m)$ messages. The exact optimal dependence on the parameters $g,t$ and $p$ remains an open question.

\end{document}